\documentclass{amsart}

\usepackage[letterpaper,margin=1in]{geometry}
\usepackage{amsmath,amssymb,amsthm,mathtools}
\usepackage{booktabs,array}
\usepackage{xcolor}
\usepackage{microtype}
\usepackage[colorlinks=true,linkcolor=blue!55!black,citecolor=blue!55!black,urlcolor=blue!55!black]{hyperref}
\usepackage{enumerate}

\numberwithin{equation}{section}

\newtheorem{theorem}{Theorem}[section]
\newtheorem{proposition}[theorem]{Proposition}
\newtheorem{lemma}[theorem]{Lemma}

\newtheorem{definition}[theorem]{Definition}
\newtheorem{conjecture}[theorem]{Conjecture}
\theoremstyle{remark}

\newcommand{\tr}{\tau}

\newcommand{\E}{\mathbb E}
\newcommand{\one}{\mathbf 1}
\newcommand{\ip}[2]{\langle #1,#2\rangle}
\newcommand{\norm}[1]{\lVert #1\rVert}
\newcommand{\comm}[2]{[#1,#2]}
\newcommand{\ind}{\mathbf 1}

\newcommand{\dc}{\mathbf{d}}
\newcommand{\var}{{\mathrm{Var}}}
\newcommand{\Rea}{\operatorname{Re}}
\newcommand{\id}{\operatorname{id}}

\title{The Quantum KKL Inequality}

\author{Yong Jiao}
\address{School of Mathematics and Statistics, HNP-LAMA, Central South University, Changsha 410075, China}
\email{jiaoyong@csu.edu.cn}

\author{Wenlong Lin}
\address{School of Mathematics and Statistics, HNP-LAMA, Central South University, Changsha 410075, China}
\email{linwenlong2000@foxmail.com}

\author{Sijie Luo}
\address{School of Mathematics and Statistics, HNP-LAMA, Central South University, Changsha 410075, China}
\email{sijieluo@csu.edu.cn}

\author{Dejian Zhou}
\address{School of Mathematics and Statistics, HNP-LAMA, Central South University, Changsha 410075, China}
\email{zhoudejian@csu.edu.cn}

\date{}

\begin{document}
\raggedbottom

\begin{abstract}
In this paper, we resolve the quantum KKL conjecture of Montanaro and Osborne \cite{MO2010} for quantum Boolean functions on the $n$-qubit hypercube.  More precisely, for every self-adjoint unitary $T$, we bound the largest $L_2$-influence from below by a constant multiple of $\mathrm{Var}(T)\log(n)/n$.  The proof heavily depends on upper and lower commutator estimates on the Hilbert space.
\end{abstract}
\subjclass[2020]{Primary 46L53; Secondary 47D07, 94D10.}

\keywords{Quantum KKL inequality, quantum Boolean functions, noncommutative hypercontractivity, heat semigroup}

\maketitle

\section{Introduction}

 For fixed $n\in\mathbb{N}$, let $\{-1,1\}^n$ be the hypercube equipped with the uniform probability measure $\mu_n$, and let $L_{p}(\{-1,1\}^{n})$ be the associated $L_{p}$ space for $1\leq p\leq \infty$. In the sequel, we will use the shorthand notation $[n]\coloneqq\{1,2,\cdots,n\}$. For each $j\in[n]$, the $j$-th influence of $f:\{-1,1\}^n\to\mathbb{R}$ is given by
$$\mathrm{Inf}_j(f)=\mu_n(\{x\in \{-1,1\}^n|\, f(x)\neq f(x^{\oplus j})\}),$$
where $x^{\oplus j}$ means flipping the $j$-th variable of $x$, i.e. for $x=(x_1,\cdots, x_n)\in \{-1,1\}^n$, 
$$x^{\oplus j}=(x_1,\cdots,x_{j-1},-x_j,x_{j+1},\cdots,x_n).$$
The total influence of $f$ is defined by $\mathrm{Inf}(f)=\sum_{j\in[n]}\mathrm{Inf}_j(f)$,  which is often used to measure the complexity of the function $f$. The $j$-th partial derivative of the $f:\{-1,1\}^{n}\to \mathbb{R}$ is defined as follows:
\begin{equation*}
\dc_{j}(f)(x)\coloneqq\frac{f(x)-f(x^{\oplus j})}{2},\quad x\in\{-1,1\}^{n}.
\end{equation*}
For each Boolean function $f:\{-1,1\}^n\to\{-1,1\}$ and $j\in [n]$, it is easy to verify that
\begin{equation}\label{Inf-Dj}
\mathrm{Inf}_j(f)=\|\dc_{j}(f)\|_{L_2(\{-1,1\}^n)}^2=\|\dc_{j}(f)\|_{L_p(\{-1,1\}^n)}^p,\quad \forall 1\leq p< \infty.
\end{equation}
A Boolean function $f:\{-1,1\}^n\to\{-1,1\}$ is said to be balanced whenever $\mathbb{E}_{\mu_n}(f)=0$, that is, $\mathrm{Var}(f)=1$. 
In the remarkable paper \cite{KKL1988}, Kahn,  Kalai and Linial proved that for each balanced Boolean
function, there exists some universal constant $C>0$ such that 
\begin{equation}\label{KKL}
\max_{j\in[n]}\mathrm{Inf}_j(f)\geq C\frac{\log(n)}{n}. 
\end{equation}
Thus, some variable has an influence at least $O(\log(n)/n)$ which is larger than the order $1/n$ deduced from the Poincar\'e inequality.
Since then, this inequality of Kahn,  Kalai and Linial (KKL inequality in short) plays an important role in Boolean analysis and has been generalized and strengthened by many mathematicians in different directions (see e.g. \cite{EG2022, Fr1998, Ta1994}).

 One of significant improvements of the KKL inequality is the Talagrand ($L_{1}$-$L_{2}$-) influence inequality established in  \cite{Ta1994}. More precisely, Talagrand proved that: 
There exists a universal constant $C>0$ such that for each function $f:\{-1,1\}^{n}\to\mathbb{R}$ the following holds
\begin{equation}\label{Ta-L1L2}
\var(f)\leq C\sum_{j=1}^{n}\frac{\|\dc_{j}(f)\|^{2}_{L_{2}(\{-1,1\}^{n})}}{1+\log(\|\dc_{j}(f)\|_{L_{2}(\{-1,1\}^{n})}/\|\dc_{j}(f)\|_{L_{1}(\{-1,1\}^{n})})}.
\end{equation}
It is easy to check that for each balanced Boolean function  \eqref{Ta-L1L2} implies the KKL inequality \eqref{KKL} via \eqref{Inf-Dj}.

Another improvement of the KKL inequality was recently obtained by Eldan and Gorss \cite{EG2020} (see also \cite{EG2022}).
Particularly, motivated by a conjecture of Talagrand \cite{Ta1997}, Eldan and Gorss \cite{EG2020} (see also \cite{EG2022}) applied stochastic analysis techniques to prove that
there exists a universal constant $C>0$ such that 
\begin{equation}\label{EG-2022}
\var(f)\sqrt{\log\left(1+\frac{e}{\sum_{j=1}^{n}\|\dc_{j}(f)\|^{2}_{L_{1}(\{-1,1\}^{n})}}\right)}\leq C\left\|\left(\sum_{j=1}^{n}|\dc_{j}(f)|^{2}\right)^{1/2}\right\|_{L_{1}(\{-1,1\}^{n})},
\end{equation}
for every Boolean function $f:\{-1,1\}^{n}\to\{-1,1\}$. One may find the detailed proof that how the Eldan-Gorss inequality \eqref{EG-2022} deduces the KKL inequality \eqref{KKL}.

 We also refer interested readers to \cite{BKKKL1992,CL2012,EG2022,KKKMS2021,OW2013} for further extensions and applications of  the KKL inequality, the Talagrand influence inequality, and to \cite{BIM2023,EKLM2022,IZ2024,Ros2020} for new proofs of the  Eldan-Gross inequality \eqref{EG-2022}.

This paper is mainly concerned with the quantum version of the KKL inequality. To this end, we first introduce the basic notation.  A quantum analogue of functions on the Boolean hypercubes, i.e., functions of $n$ bits, is the observables on $n$ qubits. Then the quantum hypercube is the  algebra of observables $\mathcal{M}_{2^n}:=M_2(\mathbb{C})^{\otimes n}$ equipped with the normalized trace $\tau:=\frac{1}{2^n}\mathrm{Tr}$ (here $\mathrm{Tr}$ is the usual trace on matrix). For $1\leq p<\infty$, define the Lebesgue norm by setting
\[\|T\|_{p}=\Big(\tau(|T|^p)\Big)^{1/p},\quad T\in \mathcal{M}_{2^n}.\]
The following definition of quantum Boolean function is taken from \cite[Definition 3.1]{MO2010}.
\begin{definition}
	An operator $T\in\mathcal{M}_{2^n}$ is called a \emph{quantum Boolean function}
	if it is self-adjoint and unitary, that is,
	\[
	T=T^*,\qquad T^2=\one.
	\]
\end{definition}
For each $j\in [n]$, let $D_{j}$ be the $j$-th partial derivative operator on $\mathcal{M}_{2^n}$ given by 
\begin{equation}\label{eq:Dj}
D_j(T)=\left(\mathbb I^{\otimes(j-1)}\otimes\mathcal P
\otimes\mathbb I^{\otimes(n-j)}\right)(T),
\quad T\in\mathcal{M}_{2^n}.
\end{equation}
Here $\mathbb{I}$ denotes the identity map on $M_2(\mathbb{C})$, and $\mathcal{P}:M_2(\mathbb{C})\to M_2(\mathbb{C})$ is a projection given by 
\[\mathcal{P}(A)=A- \frac{1}{2}\mathrm{Tr}(A) \one_2.\]
Then the quantum $L_p$-influence of given $T\in \mathcal{M}_{2^n}$ in the $j$-th coordinate is defined as
\[\mathrm{Inf}_j^p(T):=\|D_jT\|_p^p,\quad 1\leq p<\infty.\]
Particularly, when $p=1$, the $L_1$-influence  $\mathrm{Inf}_j^1(T)=\|D_j(T)\|_1$ is also called the geometric influence (for its relation to isoperimetric inequalities in the classical setting). 

 In \cite[Proposition 11.1]{MO2010}, Montanaro and Osborne derived a quantum analogy of the Talagrand influence inequality \eqref{Ta-L1L2} via the quantum hypercontractivity principle. However, we cannot obtain a quantum KKL inequality as in the classical hypercube setting since now the quantum version of \eqref{Inf-Dj} fails. More precisely, the identity 
$\|D_jT\|_2^2=\|D_jT\|_1$  may \emph{fail} for some quantum Boolean function. Actually, in \cite{MO2010}, Montanaro and Osborne provided some concrete example such that   $\|D_jT\|_1=\|D_jT\|_2$.  Nevertheless, they proved that the quantum KKL inequality holds for some operators \(T\) satisfying $\|D_jT\|_1=\|D_jT\|_2$. This led them to formulate the quantum KKL conjecture below. Denote by
$$\var(T)=\tau(|T|^2)-\tau(T)^2$$
  the variance of $T\in \mathcal{M}_{2^n}$. In particular, if $T$ is a quantum Boolean function, then $
 \mathrm{Var}(T) =1-\tau(T)^2$. 
 
 \begin{conjecture}[Quantum KKL conjecture]\label{QKKL}
There exists a universal constant $C>0$ such that for each $n\in \mathbb{N}$ and quantum Boolean function $T\in \mathcal{M}_{2^n}$ the following holds
\[
\max_{j\in [n]}\|D_{j}(T)\|^{2}_{L_{2}(\mathcal{M}_{2^n})}\geq C\var(T) \frac{\log(n)}{n}.
\]
\end{conjecture}

Subsequently, numerous researchers, including the present authors, attempted to resolve the quantum KKL conjecture. The first significant progress was made by Rouz\'{e},  Wirth, and Zhang \cite{RWZ2024} in the sense of $L_p$-influence with $1\leq p<2$. In fact, they proved that (\cite[Theorem 3.9]{RWZ2024})
\[\max_{j\in [n]}\|D_jT\|_p^p\geq C_p \var(T) \frac{\log(n)}{n},\quad 1\leq p<2,\]
for each quantum Boolean function $T\in\mathcal{M}_{2^n}$. Unfortunately, the constant \(C_p\) blows up as \(p \to 2\). Consequently, the KKL conjecture remains unresolved by their result.  Recently, the authors \cite{JLZZ2024} obtained the same result of  Rouz\'{e},  Wirth, and Zhang via quantum restriction method  and quantum Eldan-Gross inequality (see \cite{BGX2024} for a independent proof). Our previous work \cite{JLZ2025} shows that the noncommutative $L_2$-influence KKL inequality fails in the CAR algebra setting. In this paper, we continue this line of research and establish a quantum KKL inequality for \(L_2\)-influences, thereby resolving the KKL conjecture affirmatively.

For brevity, we introduce the total influence of $T\in \mathcal{M}_{n}$ by
\begin{equation*}
\mathrm{Inf}(T):=\sum_{j\in[n]}\mathrm{Inf}_j^2(T)=\sum_{j\in[n]}\norm{D_jT}_2^2.
\end{equation*}
A quantum Boolean function $T$ is said to be nontrivial if $T\neq \mathbf{1}$. Our main result of this paper is read as follows.

\begin{theorem}\label{thm:main}
	Assume that $T\in \mathcal{M}_{2^n}$ is a nontrivial quantum Boolean function.  Then
	\begin{equation}\label{eq:exp-form}
		\log\frac{1}{\max_{j\in[n]}\norm{D_jT}_2^2}
		\le 2^{11} \left(\frac{\mathrm{Inf}(T)}{\mathrm{Var}(T)}\right).
	\end{equation}
As an consequence, the following quantum KKL inequality holds
\[
\max_{j\in [n]}\norm{D_jT}_2^2\geq \frac{\mathrm{Var}(T)}{2^{12}}\left(\frac{\log n}{n}\right).
	\]
\end{theorem}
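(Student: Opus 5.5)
The plan is to run the semigroup proof of the classical KKL inequality, with the classical identity \eqref{Inf-Dj} replaced by an operator inequality for commutators of quantum Boolean functions. Let $P_t=e^{-tL}$ with $L=\sum_j D_j$ be the quantum heat (depolarizing) semigroup on $\mathcal M_{2^n}$. The $D_j$ are commuting projections that commute with $P_t$. Since $P_t T\to\tau(T)\one$, we have
\[
\var(T)=2\int_0^\infty \sum_{j\in[n]} \norm{D_jP_tT}_2^2\,dt .
\]
By the quantum hypercontractivity of Montanaro--Osborne (the qubit version of Bonami--Beckner), $\norm{P_tX}_2\le \norm{X}_{p(t)}$ with $p(t)=1+e^{-2t}$. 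Writing $P_t=P_{t/2}P_{t/2}$, I will use this in the form
\[
\norm{D_jP_tT}_2^2\le e^{-t}\norm{D_jT}_{p(t/2)}^2 .
\]
The extra factor $e^{-t}$ comes from the spectral gap on the range of $D_j$.

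The core step is a quantum substitute for $\norm{D_jf}_p^p=\norm{D_jf}_2^2$. I aim for an estimate of the form
\[
\norm{D_jT}_p^2\le C\,\norm{D_jT}_2^{4/p}\qquad (1\le p\le 2),
\]
with $C$ a universal constant that does not blow up as $p\to2$. This uniformity is exactly where Rouz\'e--Wirth--Zhang lose control.

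To get it, I will express $D_jT$ through commutators. With Pauli matrices $\sigma_a^{(j)}$ acting on the $j$-th qubit,
\[
D_jT=\tfrac14\sum_{a=1}^3 \sigma_a^{(j)}\,[\sigma_a^{(j)},T],
\]
so $\norm{D_jT}_p$ is bounded above by the commutator norms $\norm{[\sigma_a^{(j)},T]}_p$. Conversely, $\norm{[\sigma_a^{(j)},T]}_2\le 2\norm{D_jT}_2$, which gives the matching lower bound in $L_2$. These are the "upper and lower commutator estimates".

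For a single commutator $Y=[\sigma,T]$, with $T$ and $\sigma$ self-adjoint unitaries, put $U=\sigma T\sigma T$. A direct computation gives
\[
|Y|^2=2-U-U^*,
\]
where $U$ is unitary and $TUT=U^*$. The spectral measure of $U$ is therefore symmetric under $\lambda\mapsto\bar\lambda$. I then want to show that the eigenvalues of $|Y|^2$ in $[0,4]$ carry their mass in a way that gives $\tau(|Y|^p)\lesssim \tau(|Y|^2)^{p/2\cdot(2/p)}$ uniformly in $p$. I expect this to use the $2\times2$ block structure that $T$ and $\sigma$ generate, which is Jordan's lemma for two projections. I expect this lemma to be the main obstacle. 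Plain Hölder together with $\norm{Y}_\infty\le2$ only returns the trivial bound, so the two-projection structure has to be exploited genuinely. If the clean inequality fails, the fallback is a weaker version: bound $\norm{D_jT}_p^p$ by $C(\norm{D_jT}_2^2)^{\,c(2-p)+p/2}$ for some $c>0$, which is still enough below.

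Granting the lemma, set $\delta=\max_j\norm{D_jT}_2^2$. Then
\[
\norm{D_jT}_{p(t/2)}^2\le C\,\norm{D_jT}_2^2\,\delta^{\,2/p(t/2)-1}.
\]
For small $t$ the exponent satisfies $2/p(t/2)-1\approx t/2$. Split the time integral at $t_0\asymp 1$. The tail $\int_{t_0}^\infty$ is at most $e^{-2t_0}\var(T)$ by the spectral gap, and is absorbed into the left side. On $[0,t_0]$ we get
\[
\var(T)\lesssim \mathrm{Inf}(T)\int_0^{t_0}\delta^{ct}\,dt\le \mathrm{Inf}(T)\,\frac{1}{c\log(1/\delta)}.
\]
Rearranging gives $\log(1/\delta)\le C'\,\mathrm{Inf}(T)/\var(T)$, which is \eqref{eq:exp-form}; tracking constants should give $2^{11}$.

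For the KKL consequence, note $\mathrm{Inf}(T)\le n\delta$. If $\delta\ge n^{-1/2}$ the bound is trivial, so assume $\delta< n^{-1/2}$. Then $\tfrac12\log n\le\log(1/\delta)\le 2^{11}n\delta/\var(T)$, which yields $\delta\ge \var(T)\log n/(2^{12}n)$.
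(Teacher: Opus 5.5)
\textbf{There is a genuine gap: the key lemma is false.} The comparison $\norm{D_jT}_p^2\le C\norm{D_jT}_2^{4/p}$ fails for every fixed $p<2$, and so does your fallback. On two qubits take $T=\cos\theta\,\sigma^3\otimes\one_2+\sin\theta\,\sigma^1\otimes\sigma^3$. This is a quantum Boolean function because $\sigma^1$ and $\sigma^3$ anticommute. Then $D_2T=\sin\theta\,\sigma^1\otimes\sigma^3$ is a multiple of a unitary, so $\norm{D_2T}_p=|\sin\theta|$ for all $p$. Your lemma would demand $\sin^2\theta\le C|\sin\theta|^{4/p}$, and the fallback would demand $|\sin\theta|^p\le C|\sin\theta|^{p+2c(2-p)}$; both fail as $\theta\to0$.

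Correspondingly, $Y=[\one_2\otimes\sigma^1,T]=-2i\sin\theta\,\sigma^1\otimes\sigma^2$ has $|Y|^2=4\sin^2\theta\,\one$. All Jordan angles coincide and are small, so the two-projection structure imposes nothing that pushes the spectrum of $|Y|^2$ toward $\{0,4\}$.

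Nor does the $\delta$-weighted form you actually integrate, $\norm{D_jT}_p^2\le C\norm{D_jT}_2^2\,\delta^{2/p-1}$, survive. For odd $n$ let $T=\cos\theta\,\mathrm{Maj}_n(\sigma_{3e_1},\dots,\sigma_{3e_n})+\sin\theta\,(\sigma^1)^{\otimes n}$; majority is odd, so the two terms anticommute and $T^2=\one$. Each $D_jT$ is $\cos\theta$ times an odd diagonal operator plus $\sin\theta\,(\sigma^1)^{\otimes n}$, hence $(D_jT)^2\ge\sin^2\theta\,\one$. With $\sin^2\theta=n^{-1/4}$ you get $\norm{D_jT}_p^2\ge n^{-1/4}$, while $\norm{D_jT}_2^2$ and $\delta$ are both $n^{-1/4}(1+o(1))$, so the inequality fails for large $n$.

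This is exactly the breakdown of $\norm{D_jT}_1=\norm{D_jT}_2^2$ that the paper stresses. Without a coordinatewise $L_p$--$L_2$ comparison, your heat-semigroup/hypercontractivity scheme only yields the Montanaro--Osborne quantum Talagrand inequality or the $L_p$ ($p<2$) bounds. In the first example, what saves the theorem is a different coordinate ($D_1T=T$), so the needed information is global, not per-coordinate.

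\textbf{The missing idea.} The paper never compares norms of a single $D_jT$; it uses $T^2=\one$ globally through commutators, with two ingredients.
\begin{enumerate}[(i)]
\item \emph{Upper bound.} For self-adjoint $X,Y$ of degrees at most $K$ and $d$, $\norm{[X,Y]}_2^2\le 6(16K/d)^d\sqrt{\beta(X)}\bigl(\mathrm{Inf}(X)\norm{Y}_2^2+\mathrm{Inf}(Y)\bigr)$. This supplies the factor $\sqrt{\beta}$. It is proved by telescoping $XY-YX$ through partial transposes into pieces $\mathcal B_j$ localized at coordinate $j$, then using hypercontractivity for low-degree elements.
\item \emph{Lower bound.} Use the random test operators $Y_{\varepsilon,t}=\sum_{j,a}\varepsilon_{ja}\widetilde{\operatorname{ad}}_{ja}(\Phi_t)$ with $\Phi_t=\Delta^{-1}(P_t-P_{2t})T$. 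The anticommutation $T\,\widetilde{\operatorname{ad}}_{ja}(T)+\widetilde{\operatorname{ad}}_{ja}(T)\,T=0$ gives $\E_\varepsilon\norm{[T/2,Y_{\varepsilon,t}]}_2^2\ge W_t^2/\mathrm{Inf}(T)$. The operator bound $\Gamma(P_sT)\le\frac1{2s}\one$, together with the truncations $T_K$, $\Pi_{\le d}$ and the function $u$, keeps this above $W_t^2/(4\mathrm{Inf}(T))$ after truncation.
\end{enumerate}
Comparing (i) and (ii) at a dyadic time $t=1/\kappa$ with $W_t\ge\frac1{32}\mathrm{Var}(T)\sqrt{\kappa/\gamma(T)}$ gives $\log(1/\beta(T))\le2^{11}\gamma(T)$.

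Your last step, from that inequality to the $\log n/n$ bound, is correct. It is the route to the exponential inequality that has to be replaced.
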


We now outline novelties of our approach. The main feature of our argument is that we do not attempt to derive the $L_{2}$ quantum KKL inequality from a global functional inequality of Talagrand, logarithmic Sobolev, or entropy type. Such inequalities have proved extremely effective for several quantum analogues of classical influence inequalities, especially for geometric or $L_{p}$ influence (see e.g. \cite{JLZ2025,JLZZ2024,RWZ2024}). However, due the failure of $\|D_{j}T\|^{2}_{2}=\|D_{j}T\|_{1}$ for quantum Boolean function $T$, the global functional inequality approach does not capture enough information about the noncommutativity. Our starting point is therefore different.  Rather than estimating the derivatives $D_{j}$ via heat semigroup tools, we decompose them through commutator derivations. For
$j\in \{1,\cdots, n\}$ and $a\in \{1,2,3\}$, let $\widetilde{\operatorname{ad}}_{ja}(T)\coloneqq \frac{i}{\sqrt 8}[\sigma_{j}^{a},T]$. The normalization is chosen so that
\[
\sum_{a=1}^{3}\widetilde{\operatorname{ad}}_{ja}^{*}\widetilde{\operatorname{ad}}_{ja}=D_{j}, \qquad \sum_{j=1}^{n}\sum_{a=1}^{3}\widetilde{\operatorname{ad}}_{ja}^{*}\widetilde{\operatorname{ad}}_{ja}=\Delta.
\]
Consequently, $\sum_{a=1}^{3}\|\delta_{ja}T\|_{2}^{2}=\|D_{j}T\|_{2}^{2}$ and $\sum_{j,a}\|\delta_{ja}T\|_{2}^{2}=\mathrm{Inf}(T)$. Thus the local derivatives $\{\widetilde{\operatorname{ad}}_{ja}\}_{j, a}$ do not merely provide an auxiliary differential structure. They provide an exact factorization of the influence operators. This is the reason that commutators are naturally adapted to the quantum KKL conjecture.

There is a second reason why the commutator derivations are particularly well suited to quantum Boolean functions.  If $T=T^{*}$ and $T^{2}=\mathbf
1$, then the Leibniz rule gives
\[
 T\widetilde{\operatorname{ad}}_{ja}(T)+\widetilde{\operatorname{ad}}_{ja}(T)T=0.
\]
Hence every local derivative $\widetilde{\operatorname{ad}}_{ja}(T)$ fulfills certain anti-commuting relation of $T$. This is the point at which noncommutativity becomes an advantage rather than an obstruction.

The rest of the paper is organized as follows.  Section \ref{sec 2} develops the Pauli expansion, the normalized commutator derivations, and the gradient estimates used later.  Section \ref{sec 3} constructs a commuting Pauli dilation and proves the asymmetric commutator comparison.  Section \ref{sec 4} builds heat-semigroup test operators and establishes the corresponding lower commutator estimate.  Finally, Section \ref{sec 5} combines the upper and lower commutator bounds to prove our main result Theorem~\ref{thm:main}.

\section{Preliminaries}\label{sec 2}

\subsection{Boolean analysis on Quantum hypercube}
The Pauli matrices in $M_2(\mathbb{C})$ are as follows
\[
  \sigma^0=\one_2=\begin{pmatrix}1&0\\0&1\end{pmatrix},\qquad
  \sigma^1=\begin{pmatrix}0&1\\1&0\end{pmatrix},\qquad
  \sigma^2=\begin{pmatrix}0&-i\\i&0\end{pmatrix},\qquad
  \sigma^3=\begin{pmatrix}1&0\\0&-1\end{pmatrix}.
\]
They satisfy
\begin{equation*}
  (\sigma^a)^*=\sigma^a,\qquad
  (\sigma^a)^2=\one_2,\qquad
  \tr_2(\sigma^a\sigma^b)=\delta_{ab}
  \quad(0\le a,b\le3),
\end{equation*}
and, for $1\le a,b\le3$,
\[
  \sigma^a\sigma^b
  =\delta_{ab}\one_2+i\sum_{c=1}^3\epsilon_{abc}\sigma^c.
\]
Here $\delta_{ab}$ is the Kronecker symbol, and $\epsilon_{abc}$ is the
symbol given by:
\begin{equation*}
  \epsilon_{abc}
  =\begin{cases}
    1,&(a,b,c)\in\{(1,2,3),(2,3,1),(3,1,2)\},\\
   -1,&(a,b,c)\in\{(1,3,2),(3,2,1),(2,1,3)\},\\
    0,&\text{two of }a,b,c\text{ are equal}.
  \end{cases}
\end{equation*}
Moreover, $(\sigma^a)_{a\in \{0,1,2,3\}}$  forms an orthonormal basis of $M_2(\mathbb{C})$. For $\mathbf{s}\in\{0,1,2,3\}^n$, set
\[
  \sigma_{\mathbf{s}}=\bigotimes_{j=1}^n\sigma^{s_j}.
\]
Then $(\sigma_{\mathbf{s}})_{\mathbf{s}}$ forms an orthonormal basis of $\mathcal{M}_{2^n}=M_2(\mathbb{C})^{\otimes n}$.
Accordingly, each $T\in \mathcal{M}_{2^n}$ can uniquely be expressed as
\[
T=\sum_{\mathbf{s}\in\{0,1,2,3\}^n}\widehat{T}(\mathbf{s})\sigma_{\mathbf{s}},
\qquad
\widehat{T}(\mathbf{s})=\tr(\sigma_{\mathbf{s}}T).
\]
Here $\widehat{T}(\mathbf{s})$ is the Fourier coefficient. Note that if $T\in \mathcal{M}_{2^n}$ is self-adjoint, then $(\widehat{T}(\mathbf{s}))_{\mathbf{s}} $ are all real numbers. 

\noindent Let
\[
\Delta=\sum_{j=1}^nD_j,
\qquad P_t=e^{-t\Delta}.
\]
Given $\mathbf{s}\in\{0,1,2,3\}^n$, denote
\[\qquad
\operatorname{supp}(\mathbf{s})=\{j\in[n]:s_j\ne0\},
\qquad
|\mathbf{s}|=\#\operatorname{supp}(\mathbf{s}),\]
where $\# B$ denotes the cardinality of a finite set $ B$. 
According to the definition of $D_j$ as in \eqref{eq:Dj}
\begin{equation}\label{eq:Ej-on-Pauli}
  D_j(\sigma_{\mathbf{s}})
  =\ind_{\{s_j\ne0\}}\sigma_{\mathbf{s}}.
\end{equation}
Then $\Delta\sigma_{\mathbf{s}}=|\mathbf{s}|\sigma_{\mathbf{s}}$ and
$P_t\sigma_{\mathbf{s}}=e^{-t|\mathbf{s}|}\sigma_{\mathbf{s}}$ for each $\mathbf{s}\in\{0,1,2,3\}^n$.
For each  $T\in \mathcal{M}_{2^n}$, 
Parseval's identity and \eqref{eq:Ej-on-Pauli} now give
\begin{align}
  \norm T_2^2
    &=\sum_{\mathbf{s}}|\widehat{T}(\mathbf{s})|^2,
      \notag\\
  \norm{D_jT}_2^2
    &=\sum_{\mathbf{s}:s_j\ne0}|\widehat{T}(\mathbf{s})|^2,
      \notag\\
  \mathrm{Inf}(T):=\sum_{j\in [n]}\norm{D_jT}_2^2
    &=\ip{T}{LT}=\sum_{\mathbf{s}}|\mathbf{s}|\,
      |\widehat{T}(\mathbf{s})|^2.
      \label{eq:total-inf}
\end{align}

  If
$X\ne0$, define its Pauli degree by
\[
  \deg X:=\max\{|\mathbf{s}|:\widehat X(\mathbf{s})\ne0\},
  \qquad \deg0:=0.
\]
The projection onto Pauli degrees at most $m$ is defined by setting
\[
  \Pi_{\le m}X
  :=\sum_{|\mathbf{s}|\le m}
    \widehat X(\mathbf{s})\sigma_{\mathbf{s}}.
\]
Then
\begin{equation}\label{eq:parent-tail}
  \norm{T-\Pi_{\le m}T}_2^2
  =\sum_{|\mathbf{s}|>m}\widehat{T}(\mathbf{s})^2\leq\frac1{m+1}
  \sum_{|\mathbf{s}|>m}|\mathbf{s}|\widehat{T}(\mathbf{s})^2
  \le \frac{\mathrm{Inf}(T)}{m+1}.
\end{equation}

\subsection{Commutator estimates}
Throughout this manuscript, square brackets denote the
commutator:
\begin{equation*}
  [A,B]:=AB-BA.
\end{equation*}
\noindent For $S,X\in\mathcal{M}_{2^n}$, define
\begin{equation*}
  \operatorname{ad}_S(X):=[S,X]=SX-XS.
\end{equation*}
We later also use the normalized derivation
\begin{equation}\label{eq:adtilde-def}
  \widetilde{\operatorname{ad}}_S(X)
  :=\frac{i}{\sqrt8}\operatorname{ad}_S(X).
\end{equation}
For $1\le j\le n$, let
\[
e_j=(0,\ldots,0,\underset{j\text{th position}}{1},0,\ldots,0)
\in\{0,1\}^n
\]
be the $j$-th standard basis vector.  For $1\le a\le3$, the multi-index
$a e_j$ belongs to $\{0,1,2,3\}^n$, and
single-qubit Pauli operator by
\begin{equation*}
	\sigma_{a e_j}
	:=\one_2^{\otimes(j-1)}\otimes\sigma^a
	\otimes\one_2^{\otimes(n-j)}.
\end{equation*}
For $j\in[n]$ and $a\in[3]$, we write for convenience
\[
  \operatorname{ad}_{ja}:=  \operatorname{ad}_{\sigma_{a e_j}},\qquad \widetilde{\operatorname{ad}}_{ja}
  :=\widetilde{\operatorname{ad}}_{\sigma_{a e_j}}.
\]
The following lemma is the Leibniz rule for the mapping $\operatorname{ad}_{S}$ for $S\in \mathcal{M}_{2^{n}}$.
\begin{lemma}\label{lem:adS}
For $S,X,Y\in\mathcal{M}_{2^n}$:
\begin{enumerate}[{\rm (i)}]
  \item $\operatorname{ad}_S^*=\operatorname{ad}_{S^*}$; in
  particular, $\operatorname{ad}_S$ is self-adjoint on
  $L_2(\mathcal{M}_{2^n},\tr)$ when $S=S^*$.
  \item
  $\operatorname{ad}_S(XY)
   =\operatorname{ad}_S(X)Y+X\operatorname{ad}_S(Y)$.
  \item If $X^2=\one$, then
  $\operatorname{ad}_S(X)X+X\operatorname{ad}_S(X)=0$.
  \item If $S^2=\one$, then
  $\operatorname{ad}_S^2(X)=2(X-SXS)$.
\end{enumerate}
\end{lemma}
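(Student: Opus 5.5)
The four items are direct algebraic verifications, so I will check each one straight from the definitions $\operatorname{ad}_S(X)=SX-XS$ and the inner product $\ip{X}{Y}=\tr(X^*Y)$ on $L_2(\mathcal{M}_{2^n},\tr)$. I will do them in order (i), (ii), (iii), (iv), since (iii) is obtained by applying (ii) to the identity $X^2=\one$.

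For (i), I expand $\ip{\operatorname{ad}_S(X)}{Y}=\tr\bigl((SX-XS)^*Y\bigr)=\tr(X^*S^*Y)-\tr(S^*X^*Y)$ and use cyclicity of the trace $\tr$ to rewrite the second term as $\tr(X^*YS^*)$. This gives $\tr\bigl(X^*(S^*Y-YS^*)\bigr)=\ip{X}{\operatorname{ad}_{S^*}(Y)}$, so $\operatorname{ad}_S^*=\operatorname{ad}_{S^*}$. Self-adjointness of $\operatorname{ad}_S$ when $S=S^*$ is then immediate.

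For (ii), I write $S XY-XYS=(SX-XS)Y+X(SY-YS)$, adding and subtracting $XSY$. For (iii), I apply (ii) with $Y=X$ to get $\operatorname{ad}_S(X^2)=\operatorname{ad}_S(X)X+X\operatorname{ad}_S(X)$. If $X^2=\one$, the left-hand side is $\operatorname{ad}_S(\one)=S-S=0$, which is exactly the claim.

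For (iv), I expand $\operatorname{ad}_S^2(X)=S(SX-XS)-(SX-XS)S=S^2X-2SXS+XS^2$. With $S^2=\one$ this equals $2X-2SXS=2(X-SXS)$.

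There is no genuine obstacle here. The only point requiring care is (i): one must use the normalized trace's cyclicity and track the adjoint convention in the inner product consistently, linear in the second slot. With the opposite convention, the same computation goes through after conjugating.
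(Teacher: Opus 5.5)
Your proposal is correct and follows the paper's proof essentially line by line: (ii)--(iv) are the same expansions, and for (i) you move $\operatorname{ad}_S$ out of the first slot of the inner product instead of the second, which is the same cyclicity-of-trace computation. Nothing is missing.
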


\begin{proof}
(i) For $A,B\in\mathcal{M}_{2^n}$,
\begin{align*}
  \ip{A}{\operatorname{ad}_S(B)}
  &=\tr(A^*SB-A^*BS)\\
  &=\tr\bigl((A^*S-SA^*)B\bigr)
   =\ip{\operatorname{ad}_{S^*}(A)}{B}.
\end{align*}

(ii) We have 
\begin{align*}
  \operatorname{ad}_S(XY)
  &=SXY-XYS\\
  &=(SX-XS)Y+X(SY-YS)\\
  &=\operatorname{ad}_S(X)Y+X\operatorname{ad}_S(Y).
\end{align*}

(iii) If $X^2=\one$, then
\[
  0=\operatorname{ad}_S(\one)
   =\operatorname{ad}_S(X^2)
   =\operatorname{ad}_S(X)X+X\operatorname{ad}_S(X).
\]

(iv) If $S^2=\one$, then
\[
  \operatorname{ad}_S^2(X)
  =S(SX-XS)-(SX-XS)S
  =2(X-SXS).
\]
\end{proof}

\begin{lemma}\label{lem:adja}
For $j\in[n]$ and $a\in\{1,2,3\}$: we have the following:
\begin{enumerate}[{\rm (i)}]
  \item If $X\in \mathcal{M}_{2^{n}}$ and $X$ is self-adjoint, then
  $\widetilde{\operatorname{ad}}_{ja}(X)^*
   =\widetilde{\operatorname{ad}}_{ja}(X)$.
  \item  For every $X\in\mathcal{M}_{2^n}$,
  $\deg(\widetilde{\operatorname{ad}}_{ja}(X))\le\deg X$.
  \item For every $X\in\mathcal{M}_{2^n}$,
  \[
    \Delta\widetilde{\operatorname{ad}}_{ja}(X)
    =\widetilde{\operatorname{ad}}_{ja}(\Delta X).
  \]
  \item For every $j\in[n]$,
  \begin{equation*}
    \sum_{a=1}^3
    \widetilde{\operatorname{ad}}_{ja}^*
    \widetilde{\operatorname{ad}}_{ja}
    =D_j.
  \end{equation*}
  \item  For every $X\in\mathcal{M}_{2^n}$,
  \begin{equation*}
    \sum_{j\in[n]}\sum_{a=1}^3
    \norm{\widetilde{\operatorname{ad}}_{ja}(X)}_2^2
    =\mathrm{Inf}(X).
  \end{equation*}
\end{enumerate}
\end{lemma}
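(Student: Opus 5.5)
All five items reduce to computations on the Pauli basis $\{\sigma_{\mathbf{s}}\}$, using the one-qubit commutation relations. Every operator involved is linear, so it suffices to know how $\operatorname{ad}_{ja}$ acts on each $\sigma_{\mathbf{s}}$. Tensor factors other than the $j$-th commute with $\sigma_{ae_j}$, so
\[
\operatorname{ad}_{ja}(\sigma_{\mathbf{s}})=\sigma^{s_1}\otimes\cdots\otimes[\sigma^a,\sigma^{s_j}]\otimes\cdots\otimes\sigma^{s_n}.
\]
From $\sigma^a\sigma^b=\delta_{ab}\one_2+i\sum_c\epsilon_{abc}\sigma^c$ we get $[\sigma^a,\sigma^0]=0$, $[\sigma^a,\sigma^a]=0$, and $[\sigma^a,\sigma^b]=2i\epsilon_{abc}\sigma^c$ for $b\ne a$, $b\neq0$, where $c$ is the third index. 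Consequently $\operatorname{ad}_{ja}(\sigma_{\mathbf{s}})$ is either $0$ or $\pm 2i\,\sigma_{\mathbf{s}'}$, where $\mathbf{s}'$ differs from $\mathbf{s}$ only in coordinate $j$ and has $s'_j\ne0$.

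Items (i)--(iii) follow from this formula.

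(i) Since $\sigma_{ae_j}$ and $X$ are self-adjoint, $[\sigma_{ae_j},X]^*=-[\sigma_{ae_j},X]$. Multiplying by $i/\sqrt8$ makes the result self-adjoint.

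(ii) $\operatorname{ad}_{ja}$ maps $\sigma_{\mathbf{s}}$ into the span of a $\sigma_{\mathbf{s}'}$ with $\operatorname{supp}(\mathbf{s}')=\operatorname{supp}(\mathbf{s})$ (or to $0$). So it preserves each level $\{|\mathbf{s}|=k\}$, and the degree cannot increase.

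(iii) For the same reason, $\operatorname{ad}_{ja}$ preserves every eigenspace of $\Delta$, on which $\Delta$ acts as the scalar $|\mathbf{s}|$. Hence $\Delta$ and $\operatorname{ad}_{ja}$ commute.

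(iv) By Lemma \ref{lem:adS}(i) and the self-adjointness of $\sigma_{ae_j}$, we have $\widetilde{\operatorname{ad}}_{ja}^*=\frac{-i}{\sqrt8}\operatorname{ad}_{ja}$. Therefore $\widetilde{\operatorname{ad}}_{ja}^*\widetilde{\operatorname{ad}}_{ja}=\frac18\operatorname{ad}_{ja}^2$. By Lemma \ref{lem:adS}(iv), $\operatorname{ad}_{ja}^2(X)=2(X-\sigma_{ae_j}X\sigma_{ae_j})$. So it suffices to show
\[
\sum_{a=1}^3\tfrac14\bigl(X-\sigma_{ae_j}X\sigma_{ae_j}\bigr)=D_jX.
\]
Conjugation acts only on the $j$-th factor, so this reduces to a one-qubit identity on $A\in M_2(\mathbb{C})$. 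We have $\sigma^a\sigma^b\sigma^a=\sigma^b$ if $b\in\{0,a\}$ and $-\sigma^b$ otherwise. Hence $\sum_{a=1}^3\sigma^a\sigma^b\sigma^a$ equals $3\sigma^0$ for $b=0$ and $-\sigma^b$ for $b\ne0$. This gives
\[
\tfrac14\sum_{a=1}^3(\sigma^b-\sigma^a\sigma^b\sigma^a)=
\begin{cases}0,& b=0,\\ \sigma^b,& b\neq0,\end{cases}
\]
which is exactly $\mathcal P(\sigma^b)$. By linearity and the tensor structure, the identity holds for all $X$.

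(v) This is (iv) paired with $X$:
\[
\sum_a\norm{\widetilde{\operatorname{ad}}_{ja}X}_2^2=\sum_a\ip{X}{\widetilde{\operatorname{ad}}_{ja}^*\widetilde{\operatorname{ad}}_{ja}X}=\ip{X}{D_jX}=\norm{D_jX}_2^2,
\]
where the last equality uses that $D_j$ is an orthogonal projection, by \eqref{eq:Ej-on-Pauli}. Summing over $j$ and using \eqref{eq:total-inf} gives $\mathrm{Inf}(X)$.

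The one computation that needs care is the normalization in (iv): the conjugation sum, together with the factors $1/8$ and $2$, must produce exactly $\mathcal P$. Everything else is bookkeeping on Pauli strings.
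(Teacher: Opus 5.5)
Your proof is correct and follows essentially the same route as the paper: you use the Pauli-basis action of $\operatorname{ad}_{ja}$ (support preserved) for (ii)--(iii), and the identity $\widetilde{\operatorname{ad}}_{ja}^*\widetilde{\operatorname{ad}}_{ja}=\frac18\operatorname{ad}_{ja}^2$ together with Lemma~\ref{lem:adS}(iv) for (iv), then pair with $X$ for (v). The only cosmetic difference is in (iv): you sum the conjugations $\sigma^a\sigma^b\sigma^a$ over $a$ before comparing with $\mathcal P$, whereas the paper evaluates each term as $\frac12\ind_{\{s_j\ne0\}}\ind_{\{a\ne s_j\}}\sigma_{\mathbf{s}}$ and only then sums.
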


\begin{proof}
(i) We have 
\[
  \widetilde{\operatorname{ad}}_{ja}(X)^*
  =-\frac{i}{\sqrt8}[\sigma_{a e_j},X]^*
  =\frac{i}{\sqrt8}[\sigma_{a e_j},X^*]
  =\widetilde{\operatorname{ad}}_{ja}(X^*).
\]

(ii)
For $\mathbf{s}\in\{0,1,2,3\}^n$ and $c\in\{1,2,3\}$, let
$\mathbf{s}^{j\to c}$ be obtained from $\mathbf{s}$ by replacing $s_j$
with $c$.  Then
\begin{equation}\label{eq:adja-on-Pauli}
  \widetilde{\operatorname{ad}}_{ja}(\sigma_{\mathbf{s}})
  =
  \begin{cases}
    \displaystyle
    -\frac1{\sqrt2}\sum_{c=1}^3
       \epsilon_{a s_j c}\sigma_{\mathbf{s}^{j\to c}},
       &s_j\ne0,\\[6pt]
    0,&s_j=0.
  \end{cases}
\end{equation}
Every nonzero term in \eqref{eq:adja-on-Pauli} satisfies
\begin{equation}\label{eq:adja-support}
  \operatorname{supp}(\mathbf{s}^{j\to c})
  =\operatorname{supp}(\mathbf{s}),
  \qquad
  |\mathbf{s}^{j\to c}|=|\mathbf{s}|.
\end{equation}
Equations \eqref{eq:adja-on-Pauli}--\eqref{eq:adja-support} prove (ii) and
give
\[
  \Delta\widetilde{\operatorname{ad}}_{ja}(\sigma_{\mathbf{s}})
  =|\mathbf{s}|
    \widetilde{\operatorname{ad}}_{ja}(\sigma_{\mathbf{s}})
  =\widetilde{\operatorname{ad}}_{ja}
    (\Delta\sigma_{\mathbf{s}}),
\]
which proves (iii).

 (iv) Note that Lemma~\ref{lem:adS}(i) and Lemma \ref{lem:adS}(iv) give
\[
  \widetilde{\operatorname{ad}}_{ja}^*
  \widetilde{\operatorname{ad}}_{ja}
  =\frac18\operatorname{ad}_{ja}^2,
\]
and hence,
\begin{equation*}
  \widetilde{\operatorname{ad}}_{ja}^*
  \widetilde{\operatorname{ad}}_{ja}(\sigma_{\mathbf{s}})
  =\frac12
    \ind_{\{s_j\ne0\}}\ind_{\{a\ne s_j\}}\sigma_{\mathbf{s}}.
\end{equation*}
Therefore,
\[
  \sum_{a=1}^3
  \widetilde{\operatorname{ad}}_{ja}^*
  \widetilde{\operatorname{ad}}_{ja}(\sigma_{\mathbf{s}})
  =\ind_{\{s_j\ne0\}}\sigma_{\mathbf{s}}
  =D_j(\sigma_{\mathbf{s}}).
\]

(v) By item (iv), we have
\begin{align*}
  \sum_{j\in[n]}\sum_{a=1}^3
    \norm{\widetilde{\operatorname{ad}}_{ja}(X)}_2^2
  &=\sum_{j\in[n]}
    \ip{X}{\left(
      \sum_{a=1}^3
      \widetilde{\operatorname{ad}}_{ja}^*
      \widetilde{\operatorname{ad}}_{ja}\right)X}\\
  &=\sum_{j\in[n]}\ip{X}{D_jX}
   =\sum_{j\in[n]}\norm{D_jX}_2^2
   =\mathrm{Inf}(X). 
\end{align*}

\end{proof}

We also need the following basic commutator estimate with respect to a given function $u$. 
Precisely,
\begin{equation}\label{eq: def u}
	u(z)=\frac{z}{1+z^2}
\end{equation}
satisfies
\begin{equation*}
	|u(z)|\le\frac12,
	\qquad |u'(z)|\le1,
	\qquad u(1)=\frac12,
	\qquad u(-1)=-\frac12.
\end{equation*}
	In particular, $u(T)=T/2$ whenever $T\in\mathcal{M}_{2^n}$ is  a quantum Boolean function. The following lemma can be found in \cite{PS2011}; see e.g. Theorem 1 and (14) there. 

\begin{lemma}\label{lem:u}
	For self-adjoint $A,B\in\mathcal{M}_{2^n}$,
	\begin{equation*}
		\norm{u(A)-u(B)}_2\le\norm{A-B}_2,
		\qquad
		\norm{\comm{u(A)}B}_2\le\norm{\comm AB}_2.
	\end{equation*}
\end{lemma}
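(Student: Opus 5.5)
Although the paper cites \cite{PS2011} for this, in the finite-dimensional Hilbert--Schmidt setting I would prove Lemma~\ref{lem:u} directly by a double-operator (Schur multiplier) argument with spectral projections. The plan needs only two facts: the Lipschitz bound $|u(x)-u(y)|\le|x-y|$ for real $x,y$, which follows from $|u'|\le1$ and the mean value theorem, and the Pythagorean identity for block decompositions in $L_2(\mathcal{M}_{2^n},\tr)$.

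\emph{Setup.} Write the spectral decompositions $A=\sum_i\lambda_iP_i$ and $B=\sum_k\mu_kQ_k$. Here $\lambda_i,\mu_k\in\mathbb{R}$ are the distinct eigenvalues, and $\{P_i\}$, $\{Q_k\}$ are families of mutually orthogonal projections, each summing to $\one$. By functional calculus, $u(A)=\sum_iu(\lambda_i)P_i$. For any $Y$,
\[
  \norm{Y}_2^2=\sum_{i,k}\norm{P_iYQ_k}_2^2 .
\]
This holds because the blocks $P_iYQ_k$ are mutually orthogonal in $L_2(\tr)$: for $(i,k)\ne(i',k')$, traciality together with $P_iP_{i'}=\delta_{ii'}P_i$ and $Q_kQ_{k'}=\delta_{kk'}Q_k$ gives $\tr\bigl((P_iYQ_k)^*P_{i'}YQ_{k'}\bigr)=0$. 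Summing the blocks also recovers $Y$.

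\emph{First inequality.} Compress with the projections:
\[
  P_i\bigl(u(A)-u(B)\bigr)Q_k=\bigl(u(\lambda_i)-u(\mu_k)\bigr)P_iQ_k,
  \qquad
  P_i(A-B)Q_k=(\lambda_i-\mu_k)P_iQ_k .
\]
Applying the Pythagorean identity and the Lipschitz bound gives
\[
  \norm{u(A)-u(B)}_2^2=\sum_{i,k}|u(\lambda_i)-u(\mu_k)|^2\norm{P_iQ_k}_2^2
  \le\sum_{i,k}|\lambda_i-\mu_k|^2\norm{P_iQ_k}_2^2=\norm{A-B}_2^2 .
\]

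\emph{Second inequality.} Here I use only the spectral projections of $A$, taking $Q_k=P_k$. Then
\[
  P_i[u(A),B]P_k=\bigl(u(\lambda_i)-u(\lambda_k)\bigr)P_iBP_k,
  \qquad
  P_i[A,B]P_k=(\lambda_i-\lambda_k)P_iBP_k .
\]
Summing squared norms over $(i,k)$ and using $|u(\lambda_i)-u(\lambda_k)|\le|\lambda_i-\lambda_k|$ gives $\norm{[u(A),B]}_2\le\norm{[A,B]}_2$. Self-adjointness of $B$ is not needed here.

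\emph{Main obstacle.} The only point needing care is the orthogonality of the blocks, which is where traciality and finite-dimensionality enter and which is why the argument is specific to the $2$-norm. Both the Lipschitz bound on $u$ and the verification $u'(z)=(1-z^2)/(1+z^2)^2\in[-1/8,1]$ are elementary.
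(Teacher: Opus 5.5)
Your proof is correct, but it takes a different route from the paper. The paper gives no argument for this lemma. It cites Potapov--Sukochev \cite{PS2011} (Theorem 1 and (14) there), the deep theorem that Lipschitz functions are operator Lipschitz on the Schatten classes $S_p$, $1<p<\infty$, together with its commutator version.

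You instead give the classical Hilbert--Schmidt argument. Splitting $Y$ into the $\tr$-orthogonal blocks $P_iYQ_k$ turns $u(A)-u(B)$ and $\comm{u(A)}{B}$ into blockwise multiplication by the scalars $u(\lambda_i)-u(\mu_k)$, which are dominated by $|\lambda_i-\mu_k|$ since $\norm{u'}_\infty\le1$. Pythagoras then finishes the proof.

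What your route buys:
\begin{enumerate}[(i)]
\item It is self-contained and uses only traciality.
\item It gives exactly the constant $1$ that the lemma asserts. The general $S_p$ theorem is stated with an unspecified constant $c_p$ and is far more than is needed here.
\item It shows the commutator bound holds for arbitrary, not necessarily self-adjoint, $B$.
\end{enumerate}

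The citation's only extra is the extension to $p\neq2$. The paper never uses that, since Lemma~\ref{lem:u} is applied only in $L_2$ (in Lemma~\ref{lem:low 2} and in the final comparison \eqref{eq:e2}).
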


\subsection{Gradient estimate}

For self-adjoint $X\in \mathcal{M}_{2^{n}}$, define the carr\'e du champ associated with the
positive generator $\Delta$ directly by
\begin{equation}\label{eq:gamma-L}
	\Gamma(X)\coloneqq\frac{1}{2}\left((\Delta X)X+X(\Delta X)-\Delta(X^2)\right).
\end{equation}
Since $\Delta=\sum_{j=1}^nD_j$, this definition may equivalently be written as
\begin{equation}\label{eq:gamma-L-local-sum}
	\Gamma(X)
	=\frac12\sum_{j=1}^n
	\left((D_jX)X+X(D_jX)-D_j(X^2)\right).
\end{equation}
We now show that $\Gamma(X)$ is positive whenever $X\in \mathcal{M}_{2^n}$ is  self-adjoint. Indeed,  each $\widetilde{\operatorname{ad}}_{ja}(X)$ is self-adjoint if $X$ is self-adjoint, it follows that $\Gamma(X)\geq 0$ with the help of next lemma.

\begin{lemma}
	\label{lem:gamma}
	For each self-adjoint  $X\in \mathcal{M}_{2^n}$,
	\begin{equation}\label{eq:gamma-ad-square}
		\Gamma(X)
		=\sum_{j=1}^n\sum_{a=1}^3
		\bigl(\widetilde{\operatorname{ad}}_{ja}(X)\bigr)^2.
	\end{equation}
\end{lemma}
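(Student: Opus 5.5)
The plan is to prove \eqref{eq:gamma-ad-square} one site at a time, using the local form \eqref{eq:gamma-L-local-sum}. It suffices to show, for each fixed $j\in[n]$ and each self-adjoint $X$,
\[
\frac12\bigl((D_jX)X+X(D_jX)-D_j(X^2)\bigr)=\sum_{a=1}^3\bigl(\widetilde{\operatorname{ad}}_{ja}(X)\bigr)^2 ;
\]
summing over $j$ then gives the lemma.

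First I express $D_j$ through conjugations by the single-qubit Paulis. By Lemma~\ref{lem:adja}(iv) together with Lemma~\ref{lem:adS}(i),(iv), we have
\[
D_j=\sum_a\widetilde{\operatorname{ad}}_{ja}^*\widetilde{\operatorname{ad}}_{ja}=\frac18\sum_a\operatorname{ad}_{ja}^2,
\]
so that
\[
D_j(Y)=\frac14\sum_{a=1}^3\bigl(Y-\sigma_{ae_j}Y\sigma_{ae_j}\bigr)
\]
for every $Y$. Write $S_a=\sigma_{ae_j}$, so $S_a=S_a^*$ and $S_a^2=\one$.

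Next I compute the right-hand side. By Lemma~\ref{lem:adja}(i) each $\widetilde{\operatorname{ad}}_{ja}(X)$ is self-adjoint, and
\[
\bigl(\widetilde{\operatorname{ad}}_{ja}(X)\bigr)^2=-\frac18[S_a,X]^2 .
\]
Expanding and using $S_a^2=\one$,
\[
[S_a,X]^2=S_aXS_aX-S_aX^2S_a-XS_a\cdot S_aX+XS_aXS_a=S_aXS_aX+XS_aXS_a-S_aX^2S_a-X^2 .
\]
Hence
\[
-\frac18[S_a,X]^2=\frac18\bigl(X^2+S_aX^2S_a-S_aXS_aX-XS_aXS_a\bigr).
\]

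For the left-hand side, plug in the formula for $D_j$:
\[
(D_jX)X+X(D_jX)=\frac14\sum_a\bigl(2X^2-S_aXS_aX-XS_aXS_a\bigr),\qquad D_j(X^2)=\frac14\sum_a\bigl(X^2-S_aX^2S_a\bigr).
\]
Subtracting and halving gives
\[
\frac18\sum_a\bigl(X^2+S_aX^2S_a-S_aXS_aX-XS_aXS_a\bigr),
\]
which agrees term by term with the right-hand side.

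The only real step is the conjugation formula for $D_j$. It is already implicit in the proof of Lemma~\ref{lem:adja}(iv), and one can double-check it on a Pauli basis element $\sigma_{\mathbf s}$: among $a=1,2,3$, exactly two anticommute with $\sigma^{s_j}$ when $s_j\neq0$, and none do when $s_j=0$. Everything else is bookkeeping. Self-adjointness of $X$ is used only to interpret the squares as positive; the identity itself is purely algebraic.
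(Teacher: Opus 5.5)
Your proof is correct and is essentially the paper's argument. Both verify the identity site by site using the Pauli conjugation identity $\sum_{a=1}^3\sigma_{ae_j}B\sigma_{ae_j}=4E_j(B)-B$, which is equivalent to your $D_j(B)=\frac14\sum_{a=1}^3\bigl(B-\sigma_{ae_j}B\sigma_{ae_j}\bigr)$. The only differences are cosmetic: the paper rewrites $\sum_a(\widetilde{\operatorname{ad}}_{ja}(X))^2$ in terms of $E_j$, whereas you rewrite the $D_j$-side in terms of conjugations, and you obtain the conjugation formula from Lemma~\ref{lem:adja}(iv) and Lemma~\ref{lem:adS}(iv) instead of the paper's direct $\chi_a$ computation on Pauli basis elements.
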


\begin{proof}
	We first derive \eqref{eq:gamma-ad-square} from the definition
	\eqref{eq:gamma-L}. 
	According to the definition of $D_j$ \eqref{eq:Dj}, we may write 
		$D_j=\mathbb I^{\otimes n}- E_j,$
	where 
	\begin{equation}\label{eq:Ej}
				E_j=\mathbb I^{\otimes(j-1)}\otimes\mathcal E
				\otimes\mathbb I^{\otimes(n-j)}, 
	\end{equation}
	with $\mathcal{E}(A)=\frac{1}{2}\mathrm{Tr}(A)\one_2$, $A\in M_2(\mathbb{C})$. 
Then,  for each $j\in [n]$, we have
	\begin{align}
		&(D_jX)X+X(D_jX)-D_j(X^2)\notag\\
		&\qquad=(X-E_jX)X+X(X-E_jX)-\bigl(X^2-E_j(X^2)\bigr)\notag\\
		&\qquad=X^2-E_j(X)X-XE_j(X)+E_j(X^2).
		\label{eq:gamma-direct-local-expansion}
	\end{align}
	
\noindent Define
	\[
	\chi_a(b)\coloneqq
	\begin{cases}
		1,&b=0\text{ or }b=a,\\
		-1,&b\in\{1,2,3\}\setminus\{a\}.
	\end{cases}
	\]
Then, for each $\mathbf{s}\in \{0,1,2,3\}^n$,
	\begin{align*}
		\sigma_{a e_j}\sigma_{\mathbf{s}}\sigma_{a e_j}
		&=\chi_a(s_j)\sigma_{\mathbf{s}},
	\end{align*}
	and 
	\begin{align*}
		\sum_{a=1}^3
		\sigma_{a e_j}\sigma_{\mathbf{s}}\sigma_{a e_j}
		&=\left(\sum_{a=1}^3\chi_a(s_j)\right)\sigma_{\mathbf{s}}
		=\bigl(4\ind_{\{s_j=0\}}-1\bigr)\sigma_{\mathbf{s}}
		=(4E_j-\mathbb{I}^{\otimes n})(\sigma_{\mathbf{s}}).
	\end{align*}
	Hence, by linearity,
	\begin{equation}\label{eq:Pauli-conjugation-operator}
		\sum_{a=1}^3\sigma_{a e_j}B\sigma_{a e_j}=4E_j(B)-B,
		\qquad B\in\mathcal{M}_{2^n}.
	\end{equation}
	Since $\sigma_{a e_j}^2=\one$, combining equations \eqref{eq:adtilde-def} and
	\eqref{eq:Pauli-conjugation-operator}, we have
	\begin{align*}
		\sum_{a=1}^3
		\bigl(\widetilde{\operatorname{ad}}_{ja}(X)\bigr)^2
		&=-\frac18\sum_{a=1}^3
		(\sigma_{a e_j}X-X\sigma_{a e_j})^2\\
		&=\frac18\sum_{a=1}^3
		\bigl(\sigma_{a e_j}X^2\sigma_{a e_j}+X^2
		-\sigma_{a e_j}X\sigma_{a e_j}X
		-X\sigma_{a e_j}X\sigma_{a e_j}\bigr)\\
		&=\frac12\bigl(X^2+E_j(X^2)-(E_jX)X-X(E_jX)\bigr).
	\end{align*}
This together with \eqref{eq:gamma-L-local-sum} and \eqref{eq:gamma-direct-local-expansion} gives the desired assertions.  
\end{proof}

It is known that
\begin{equation}\label{eq:gradient-input}
	\Gamma(P_sX)\le e^{-s}P_s\Gamma(X),\qquad s\ge0, X\in\mathcal{M}_{2^n},
\end{equation}
see e.g.  Lemma 2.4 of
\cite{RWZ2024}.  From this, we get a useful estimate for quantum Boolean functions. 

\begin{lemma}
		If $T$ is a quantum Boolean function and $s>0$, then
	\begin{equation}\label{eq:reverse-poincare}
		\Gamma(P_sT)\le \frac1{2(e^s-1)}\one\le\frac1{2s}\one.
	\end{equation}
\end{lemma}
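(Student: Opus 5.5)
The plan is a standard reverse Poincaré argument: integrate the gradient bound \eqref{eq:gradient-input} along the semigroup, then use that $T^2=\one$ turns $P_s$ of a square into $\one$.

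\emph{Key identity.} For $0\le r\le s$, set $\phi(r)=P_{r}\bigl((P_{s-r}T)^2\bigr)$. Then $\phi(0)=(P_sT)^2$ and $\phi(s)=P_s(T^2)=\one$. Write $Y=P_{s-r}T$, which is self-adjoint, and note $\frac{d}{dr}Y=\Delta Y$. Differentiating,
\[
\phi'(r)=P_r\bigl(-\Delta(Y^2)+(\Delta Y)Y+Y(\Delta Y)\bigr)=2P_r\Gamma(Y)
\]
by the definition \eqref{eq:gamma-L}. Integrating from $0$ to $s$ gives
\[
\one-(P_sT)^2=2\int_0^s P_r\Gamma(P_{s-r}T)\,dr .
\]

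\emph{Lower bound on each integrand.} Apply \eqref{eq:gradient-input} with $X=P_{s-r}T$ and time $r$:
\[
\Gamma(P_sT)\le e^{-r}P_r\Gamma(P_{s-r}T), \quad\text{that is,}\quad P_r\Gamma(P_{s-r}T)\ge e^{r}\Gamma(P_sT).
\]
Here we use that $P_r$ is positivity preserving. It is completely positive, since $P_r$ is a tensor product of the depolarizing channels $e^{-r}\mathbb I+(1-e^{-r})\mathcal E$, which are convex combinations of completely positive unital maps.

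\emph{Conclusion.} Combining the two steps,
\[
\one\ \ge\ \one-(P_sT)^2\ \ge\ 2\Gamma(P_sT)\int_0^s e^r\,dr=2(e^s-1)\,\Gamma(P_sT).
\]
The first inequality holds because $(P_sT)^2\ge0$. Since $\Gamma(P_sT)\ge0$ by Lemma~\ref{lem:gamma}, this yields $\Gamma(P_sT)\le \frac{1}{2(e^s-1)}\one$. The second inequality in \eqref{eq:reverse-poincare} follows from $e^s-1\ge s$.

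\emph{Main obstacle.} The only delicate point is justifying the derivative computation of $\phi$ in the noncommutative setting. Because $\Delta$ does not satisfy a Leibniz rule, the product rule must be applied only to the ordinary product $Y\cdot Y$, keeping the order of the factors. Everything here is finite dimensional, so differentiability is immediate. After that, the rest is the monotone integration above.
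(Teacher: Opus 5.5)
Your proposal is correct and is essentially the paper's own proof. It uses the same interpolation $\phi(r)=P_r\bigl((P_{s-r}T)^2\bigr)$ with $\phi'(r)=2P_r\Gamma(P_{s-r}T)$, the same application of the gradient bound $\Gamma(P_rX)\le e^{-r}P_r\Gamma(X)$ at time $r$ to $X=P_{s-r}T$, and the same final comparison $2(e^s-1)\Gamma(P_sT)\le\one-(P_sT)^2\le\one$. One harmless remark: your appeal to positivity of $P_r$ in the second step is not needed, since $P_r\Gamma(P_{s-r}T)\ge e^{r}\Gamma(P_sT)$ is just the gradient bound multiplied by $e^{r}$, and integrating an operator inequality in $r$ requires nothing more.
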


\begin{proof}
	For $0\le r\le s$, set
	\[
	F(r):=P_r\bigl((P_{s-r}X)^2\bigr),
	\qquad Z_r:=P_{s-r}X.
	\]
	Using
	\[
	\frac{d}{dr}P_r(B)=-P_r(\Delta B),
	\qquad
	\frac{d}{dr}P_{s-r}X=LZ_r,
	\]
	we obtain
	\begin{align*}
		F'(r)
		&=P_r\bigl(-L(Z_r^2)+(LZ_r)Z_r+Z_r(LZ_r)\bigr)\\
		&=2P_r\Gamma(Z_r).
	\end{align*}
	The second equality is precisely the definition \eqref{eq:gamma-L} applied
	to \(Z_r\).
	Thus
	\begin{equation}\label{eq:variance-interpolation}
		P_s(X^2)-(P_sX)^2
		=2\int_0^s P_r\Gamma(P_{s-r}X)\,dr.
	\end{equation}
	Moreover, \eqref{eq:gradient-input}, applied for time $r$ to $Z_r$, yields
	\begin{align*}
		\Gamma(P_sX)
		&=\Gamma(P_rZ_r)
		\le e^{-r}P_r\Gamma(Z_r),\\
		P_s(X^2)-(P_sX)^2
		&=2\int_0^sP_r\Gamma(Z_r)\,dr\\
		&\ge2\int_0^se^r\Gamma(P_sX)\,dr
		=2(e^s-1)\Gamma(P_sX).
	\end{align*}
	Thus the definition \eqref{eq:gamma-L} gives the interpolation identity
	\eqref{eq:variance-interpolation} directly, while the operator comparison
	between its integrand and \(\Gamma(P_sX)\) is exactly where the gradient
	estimate \eqref{eq:gradient-input} is used.
	For $X=T$, $T^2=\one$, $P_s\one=\one$, and $(P_sT)^2\ge0$; hence
	\[
	2(e^s-1)\Gamma(P_sT)
	\le\one-(P_sT)^2
	\le\one,
	\qquad
	\Gamma(P_sT)
	\le\frac1{2(e^s-1)}\one
	\le\frac1{2s}\one.
	\]
\end{proof}

\section{Upper commutator estimates}\label{sec 3}
For any $X\in \mathcal{M}_{2^{n}}$, we define $\beta(X)\coloneqq \max_{j\in [n]}\norm{D_jX}_2^2$ for simplicity. The main result of this section is the following  $L_{2}$-upper bound for the commutator $[X,Y]$. 

\begin{theorem}\label{main upper bound}
	Let $X,Y\in\mathcal{M}_{2^n}$ be self-adjoint elements such that
	\[
	\norm{X}_{2}\leq1,\qquad
	\deg X\leq K,\qquad \deg Y\leq d,\qquad
	1\leq d\leq K,
	\]
	where $K,d$ are integers. Then the following holds
	\begin{equation}\label{upper bound}
		\norm{[X,Y]}_{2}^{2}
		\leq 6  C_{K,d}\sqrt{\beta(X)}
		\left(\mathrm{Inf}(X)\norm{Y}_{2}^{2}+\mathrm{Inf}(Y)\right),
	\end{equation}
	where $C_{K,d}=\left(\frac{16K}{d}\right)^{d}$.
\end{theorem}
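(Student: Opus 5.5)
The plan is to reduce the noncommutative quantity $\norm{[X,Y]}_2^2$ to a commutative quantity on the classical cube, via a \emph{commuting Pauli dilation}, and then do all the estimates there. Expanding $X=\sum_{\mathbf s}\widehat X(\mathbf s)\sigma_{\mathbf s}$ and $Y=\sum_{\mathbf t}\widehat Y(\mathbf t)\sigma_{\mathbf t}$, each Pauli pair either commutes or anticommutes. In the anticommuting case $[\sigma_{\mathbf s},\sigma_{\mathbf t}]=2\sigma_{\mathbf s}\sigma_{\mathbf t}=2\omega_{\mathbf s,\mathbf t}\sigma_{\mathbf s\ast\mathbf t}$, where $|\omega_{\mathbf s,\mathbf t}|=1$ and $\mathbf s\ast\mathbf t$ is the qubitwise group product in $(\mathbb Z_2^2)^n$. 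A pair can anticommute only if $\operatorname{supp}\mathbf s\cap\operatorname{supp}\mathbf t\neq\emptyset$. Taking Parseval and discarding the phases with the triangle inequality, I would aim for
\[
\norm{[X,Y]}_2^2\le 4\sum_{\mathbf u}\Bigl(\sum_{\mathbf s\ast\mathbf t=\mathbf u,\ \operatorname{supp}\mathbf s\cap\operatorname{supp}\mathbf t\ne\emptyset}|\widehat X(\mathbf s)||\widehat Y(\mathbf t)|\Bigr)^2 .
\]
The right-hand side is the squared $L_2$-norm of a product of \emph{commuting} functions on the dual group of $(\mathbb Z_2^2)^n$, i.e.\ on $\{-1,1\}^{2n}$. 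Set $F=\sum_{\mathbf s}|\widehat X(\mathbf s)|\chi_{\mathbf s}$ and $G=\sum_{\mathbf t}|\widehat Y(\mathbf t)|\chi_{\mathbf t}$, where coordinate $j$ of the quantum cube corresponds to a pair of classical coordinates. These have the same degree, $L_2$-norm, $\norm{D_j\cdot}_2$ and influences as $X,Y$.

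Next I would organize the overlap condition by the smaller-degree factor $G$. Each $\mathbf t$ has $|\mathbf t|\le d$. Using inclusion--exclusion over the nonempty sets $A\subseteq\operatorname{supp}\mathbf t$ that form the overlap, the overlap-restricted convolution is dominated by
\[
\sum_{\emptyset\ne A,\ |A|\le d}\ (D_AF)\,(D_AG),\qquad D_A=\prod_{j\in A}D_j ,
\]
again with nonnegative coefficients, so majorization is legitimate. The term with $|A|=m$ should be estimated by Cauchy--Schwarz over $A$ together with degree information. Each $\mathbf s$ with $|\mathbf s|\le K$ has at most $\binom{K}{m}$ subsets $A$ of size $m$, and $\sum_{m\le d}2^m\binom Km\lesssim (16K/d)^d$ is where $C_{K,d}$ should come from. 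The factor $\sqrt{\beta(X)}$ should arise by pulling out a single coordinate $j\in A$:
\[
\sum_{|A|=m}\norm{D_AF}_2^2\le \binom{K-1}{m-1}\sum_j\norm{D_jF}_2^2\le\binom{K-1}{m-1}\,\mathrm{Inf}(X),
\]
and by interpolating one factor as $\norm{D_jF}_2^2\le\sqrt{\beta(X)}\,\norm{D_jF}_2$.

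The splitting into $\mathrm{Inf}(X)\norm{Y}_2^2+\mathrm{Inf}(Y)$ comes from distributing the derivative. In terms where the overlap coordinate is charged to $F$, we keep $\norm{G}_2^2$. In terms where it is charged to $G$, we use $\norm F_2\le1$ and $\sum_j\norm{D_jG}_2^2=\mathrm{Inf}(Y)$.

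The main obstacle is controlling the $L_2$-norm of the \emph{product} $(D_AF)(D_AG)$ without losing $3^{K}$-type hypercontractive constants. A naive H\"older plus $L_4$ hypercontractivity would give $e^{cK}$ instead of the polynomial-in-$K$ constant $(16K/d)^d$. I would first try to avoid $L_4$ norms altogether by staying on the Fourier side. Convolution of coefficient sequences satisfies $\norm{a\ast b}_{\ell_2}\le\norm a_{\ell_2}\norm b_{\ell_1}$, and $\norm b_{\ell_1}$ for the level-$\le d$ part of $G$ restricted to a fixed overlap set costs only a combinatorial factor. Failing that, I would let the roles of $X$ and $Y$ be asymmetric: use hypercontractivity only on the degree-$d$ factor (cost $3^{d}$, absorbable into $(16K/d)^d$) and pure counting on the degree-$K$ factor, extracting $\sqrt{\beta(X)}$ from the sup over $j$ before summing over $j$.
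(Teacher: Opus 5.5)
Your first reduction already makes the inequality unprovable. Replacing anticommutation by $\operatorname{supp}\mathbf s\cap\operatorname{supp}\mathbf t\neq\emptyset$ and dropping phases gives a majorant that cannot see whether the two local factors at a shared site commute. The later majorant $\sum_{A}(D_AF)(D_AG)$ has the same defect. Take $X=Y=n^{-1/2}\sum_{j=1}^n\sigma_{3e_j}$. Then $K=d=1$, $\norm{X}_2=1$, $\mathrm{Inf}(X)=\mathrm{Inf}(Y)=1$, $\beta(X)=1/n$ and $[X,Y]=0$. The only overlapping pairs are $(3e_j,3e_j)$, and all of them land on $\mathbf u=\mathbf 0$. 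So your displayed majorant equals $4\bigl(\sum_j n^{-1}\bigr)^2=4$, and $\sum_A(D_AF)(D_AG)=\sum_j(D_jF)^2$ is the constant function $1$. The right-hand side of the theorem is $6\cdot 16\cdot n^{-1/2}\cdot 2=192/\sqrt n$, so for $n>2304$ no later estimate can close the chain.

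The paper keeps exactly this cancellation. Telescoping through partial transposes gives $[X,Y]=\sum_j\mathcal B_j(X,Y)$, and each $\mathcal B_j$ carries the factor $[A_j,B_j]$ at site $j$. Hence $E_j\mathcal B_j=0$, and in the example $\mathcal B_j=0$. This is what makes the pairing identity $\norm{[X,Y]}_2^2=2\Rea\sum_j\ip{D_j(XY)}{\mathcal B_j(X,Y)}$ available. Because the identity is linear in the sum over $j$, it produces the cubic quantity $\sum_j\norm{D_jX}_2\norm{D_jY}_2\bigl(\norm{D_jX}_2\norm{Y}_2+\norm{X}_2\norm{D_jY}_2\bigr)$, from which $\sqrt{\beta(X)}$ is extracted. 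Bounding the norm of a sum term by term and applying Cauchy--Schwarz over $A$, as you propose, gives at best $\mathrm{Inf}(X)\,\mathrm{Inf}(Y)$. The interpolation $\norm{D_jF}_2^2\le\sqrt{\beta(X)}\norm{D_jF}_2$ then leaves the uncontrolled sum $\sum_j\norm{D_jF}_2$.

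Neither of your fallbacks settles the product estimate you single out.
\begin{itemize}
\item Young's bound $\norm{a\ast b}_{\ell_2}\le\norm a_{\ell_2}\norm b_{\ell_1}$ costs a factor that grows with $n$, even after restricting to a fixed overlap set. For instance, $(n-1)^{-1/2}\sum_{k\ge2}\sigma_{3e_1}\sigma_{3e_k}$ has degree $2$, every term contains site $1$, its $L_2$-norm is $1$ and its Fourier $\ell_1$-norm is $\sqrt{n-1}$.
\item Hypercontractivity on the degree-$d$ factor alone is impossible. H\"older's inequality $\norm{XY}_2\le\norm X_p\norm Y_q$ with $1/p+1/q=1/2$ forces $p>2$ on the other factor unless $q=\infty$.
\end{itemize}
The missing idea is to make the H\"older exponents asymmetric: take $p=2+d/K$ on the degree-$K$ factor and $q=2+4K/d$ on the degree-$d$ factor. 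Then the moment comparison $\norm Z_r\le(r-1)^{m/2}\norm Z_2$ for $\deg Z\le m$ costs
\[
(1+d/K)^{K/2}(1+4K/d)^{d/2}\le(5eK/d)^{d/2}\le C_{K,d}^{1/2}.
\]
This, not the count of subsets $A$, is the source of $C_{K,d}$. The bound applies directly to noncommuting operators, and through the isometries $\Theta_S$ also to the twisted products $\star_S$ inside $\mathcal B_j$. So no commutative dilation is needed, and as you set it up, the dilation is harmful.
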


Theorem \ref{main upper bound} follows from the  following key estimate.

\begin{proposition}\label{prop:homogeneous}
	If $X,Y$ are self-adjoint with $\deg X\leq K$, $\deg Y\leq d$, and
	$1\leq d\leq K$, then
	\begin{equation}\label{eq:homogeneous}
		\norm{[X,Y]}_{2}^{2}\leq4C_{K,d}\sqrt{\beta(X)}
		\left(
		\norm{Y}_{2}\sqrt{\mathrm{Inf}(X)\mathrm{Inf}(Y)}+\norm{X}_{2}\mathrm{Inf}(Y)
		\right).
	\end{equation}
\end{proposition}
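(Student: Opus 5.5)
The plan is to work entirely in the Pauli expansion and to make the \emph{locality} of commutators quantitative. Write $X=\sum_{\mathbf s}\widehat X(\mathbf s)\sigma_{\mathbf s}$ and $Y=\sum_{\mathbf t}\widehat Y(\mathbf t)\sigma_{\mathbf t}$. Two Pauli strings either commute or anticommute. Moreover $[\sigma_{\mathbf s},\sigma_{\mathbf t}]=2\sigma_{\mathbf s}\sigma_{\mathbf t}$ in the anticommuting case, which can only occur if there is a site $j\in\operatorname{supp}(\mathbf s)\cap\operatorname{supp}(\mathbf t)$ with $s_j\ne t_j$. Hence
\[
[X,\sigma_{\mathbf t}]=[X-E_{\operatorname{supp}\mathbf t}X,\sigma_{\mathbf t}],
\]
where $E_S=\prod_{j\in S}E_j$ and the $E_j$ are as in \eqref{eq:Ej}. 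We also have $\norm{X-E_SX}_2^2\le\sum_{j\in S}\norm{D_jX}_2^2$. This is the mechanism that brings $\beta(X)$ into the estimate, since $|\operatorname{supp}\mathbf t|\le d$.

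\textbf{Step 1 (coefficient form).} I will compute $\norm{[X,Y]}_2^2$ by grouping according to the output string: $\sigma_{\mathbf s}\sigma_{\mathbf t}$ is a phase times $\sigma_{\mathbf u}$, where $\mathbf u=\mathbf s\cdot\mathbf t$ is the sitewise Pauli product. This gives
\[
\norm{[X,Y]}_2^2=4\sum_{\mathbf u}\Bigl|\sum_{(\mathbf s,\mathbf t)\in A(\mathbf u)}\pm\widehat X(\mathbf s)\widehat Y(\mathbf t)\Bigr|^2,
\]
where $A(\mathbf u)$ is the set of anticommuting pairs with $\mathbf s\cdot\mathbf t=\mathbf u$.

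\textbf{Step 2 (weighted Cauchy--Schwarz).} For fixed $\mathbf u$, the string $\mathbf t$ determines $\mathbf s$. I apply Cauchy--Schwarz with weights $w(\mathbf s,\mathbf t)>0$, to be chosen:
\[
\Bigl|\sum_{A(\mathbf u)}\widehat X(\mathbf s)\widehat Y(\mathbf t)\Bigr|^2\le\Bigl(\sum_{A(\mathbf u)}w\,|\widehat X(\mathbf s)|^2|\widehat Y(\mathbf t)|^2\Bigr)\Bigl(\sum_{A(\mathbf u)}w^{-1}\Bigr).
\]
The second factor is a counting problem. Since $\deg Y\le d$ and $\deg X\le K$, one has $|\mathbf u|\le K+d$. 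For a given $\mathbf u$, a compatible $\mathbf t$ is determined by a subset of $\operatorname{supp}\mathbf u$ of size $\le d$, together with at most $4^d$ local choices, plus the overlap sites lying outside $\operatorname{supp}\mathbf u$. I expect these counts to be controlled by $\sum_{r\le d}\binom{2K}{r}4^r\le(16K/d)^d=C_{K,d}$. This is where $C_{K,d}$ enters.

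\textbf{Step 3 (choice of weights and splitting).} I split each anticommuting pair according to an overlap site $j\in\operatorname{supp}\mathbf s\cap\operatorname{supp}\mathbf t$. Summing the first factor over $\mathbf u$ with $w\equiv1$ gives terms of the form
\[
\sum_{\mathbf t}|\widehat Y(\mathbf t)|^2\sum_{j\in\operatorname{supp}\mathbf t}\norm{D_jX}_2^2\le\beta(X)\,\mathrm{Inf}(Y),
\]
which is too strong in $\beta$ and too weak elsewhere. So I will instead take $w$ to be a ratio balancing $\norm{D_jX}_2$ against $\sqrt{\beta(X)}$. Applying Cauchy--Schwarz once more in $j$ should give a bound by
\[
\sqrt{\beta(X)}\Bigl(\sum_j\norm{D_jX}_2^2\Bigr)^{1/2}\Bigl(\sum_{\mathbf t}|\mathbf t|\,|\widehat Y(\mathbf t)|^2\Bigr)^{1/2}\norm{Y}_2 .
\]
This is the term $\norm Y_2\sqrt{\mathrm{Inf}(X)\mathrm{Inf}(Y)}$. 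The pairs in which the overlap site is charged to $Y$ alone, using $\norm{D_jX}_2\le\norm{X}_2$ together with one factor $\sqrt{\beta(X)}$, give the term $\norm X_2\mathrm{Inf}(Y)$. Both pieces use $\sum_j\norm{D_jY}_2^2=\mathrm{Inf}(Y)$ from \eqref{eq:total-inf}.

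\textbf{Main obstacle.} I expect the main difficulty to be the bookkeeping in Steps 2--3. The Cauchy--Schwarz must be arranged so that only one power of $\sqrt{\beta(X)}$ appears while the multiplicity of representations of each $\mathbf u$ stays bounded by $C_{K,d}$, without losing the factor $d$ or $K$ to the wrong place. My first attempt will be to fix, for each pair, the smallest overlap site $j$ as the canonical charge. I will then verify the counting bound in Step 2 for that canonical decomposition. Only after that will I optimize the weights $w$.
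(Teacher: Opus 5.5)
There is a genuine gap, and it is in Step 2: the size of the fibre $A(\mathbf u)$ is not bounded by any function of $K$ and $d$.

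\begin{itemize}
\item The ``overlap sites outside $\operatorname{supp}\mathbf u$'' are exactly the sites with $s_j=t_j\neq0$. There $(\sigma^{c})^2=\one_2$, so these sites vanish from $\mathbf u$ and can sit on any of the $n$ qubits.
\item For example, take $K=d=2$ and $\mathbf u=3e_1$. Every pair $\mathbf s=e_1+ce_k$, $\mathbf t=2e_1+ce_k$ with $k\neq1$ and $c\in\{1,2,3\}$ anticommutes and satisfies $\sigma_{\mathbf s}\sigma_{\mathbf t}=i\sigma_{3e_1}$. Hence $|A(3e_1)|\ge3(n-1)$. In general the fibre can have size of order $n^{d-1}$.
\item Your Cauchy--Schwarz therefore produces an $n$-dependent constant. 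Weights depending only on the pair cannot cure this, since $\max_{A(\mathbf u)}w\cdot\sum_{A(\mathbf u)}w^{-1}\ge|A(\mathbf u)|$.
\item Even the finite part of the count is off. The sum $\sum_{r\le d}\binom{2K}{r}4^r$ behaves like $(8eK/d)^d$ and exceeds $(16K/d)^d$ for large $K/d$, e.g.\ $K=100$, $d=10$.
\item Step 3 never specifies $w$, so its displayed bound is not derived. Its diagnosis is also off. Since $\beta(X)\le\norm{X}_2^2$, you have $\beta(X)\mathrm{Inf}(Y)\le\sqrt{\beta(X)}\,\norm{X}_2\,\mathrm{Inf}(Y)$. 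So a bound $C\beta(X)\mathrm{Inf}(Y)$ would not be ``too strong''; it would imply the statement outright. The only obstruction is the multiplicity.
\end{itemize}

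The missing ingredients are analytic. The sums over the cancelling sites are inner products between coefficient sequences of $X$ and $Y$, not counts. The paper controls them with a dimension-free product bound. The Montanaro--Osborne moment comparison $\norm{A}_p\le(p-1)^{m/2}\norm{A}_2$ for $\deg A\le m$, combined with H\"older at $p=2+d/K$ and $q=2+4K/d$, gives $\norm{AB}_2\le C_{K,d}^{1/2}\norm{A}_2\norm{B}_2$ whenever $\deg A\le K$ and $\deg B\le d$. This is exactly where $(16K/d)^d$ comes from.

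To extract a single $\sqrt{\beta(X)}$ without cross terms, the paper linearizes the square:
\[
\norm{[X,Y]}_2^2=2\Rea\ip{XY}{[X,Y]}.
\]
It then telescopes $[X,Y]=\sum_j\mathcal B_j(X,Y)$ from $XY$ to $YX$, reversing the multiplication order one qubit at a time: $\mathcal B_j=X\star_{[j-1]}Y-X\star_{[j]}Y$. This is implemented by partial transposes, which preserve degree and $L_2$-norm. Each $\mathcal B_j$ lies in the range of $D_j$ and only sees $D_jX$ and $D_jY$. Hence $\norm{[X,Y]}_2^2=2\Rea\sum_j\ip{D_j(XY)}{\mathcal B_j}$, with
\[
\norm{\mathcal B_j}_2\le2C_{K,d}^{1/2}\norm{D_jX}_2\norm{D_jY}_2,
\qquad
\norm{D_j(XY)}_2\le C_{K,d}^{1/2}\bigl(\norm{D_jX}_2\norm{Y}_2+\norm{X}_2\norm{D_jY}_2\bigr).
\]
Each resulting term is at least quadratic in local derivatives. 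So one can spend one factor $\norm{D_jX}_2\le\sqrt{\beta(X)}$ and sum the rest by Cauchy--Schwarz in $j$, which gives exactly the two terms of the statement.

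Your identity $[X,\sigma_{\mathbf t}]=[X-E_{\operatorname{supp}\mathbf t}X,\sigma_{\mathbf t}]$ is correct. However, it supplies only one local factor per Pauli string of $Y$ and leaves the interference between different $\mathbf t$ uncontrolled. The pairing identity is what removes that interference.
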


\begin{proof}[Proof of Theorem \ref{main upper bound}]
	Assume \eqref{eq:homogeneous} holds.  By $\norm{X}_{2}\leq1$ and the AM-GM inequality, we have
	\begin{equation}\label{strong}
		\begin{split}
			\norm{[X,Y]}_{2}^{2}&\leq4C_{K,d}\sqrt{\beta(X)}\left(\frac{1}{2}\|Y\|^{2}_{2}\mathrm{Inf}(X)+\frac{1}{2} \mathrm{Inf}(Y)+\mathrm{Inf}(Y)\right)\\
			&=C_{K,d}\sqrt{\beta(X)}\left(2\|Y\|^{2}_{2}\mathrm{Inf}(X)+6\mathrm{Inf}(Y)\right)\\
			&\leq6C_{K,d}\sqrt{\beta(X)}
			\left(\mathrm{Inf}(X)\norm{Y}_{2}^{2}+\mathrm{Inf}(Y)\right),
		\end{split}
	\end{equation}
	which is the desired upper bound \eqref{upper bound}. 
\end{proof}

Hence the rest of this section is devoted to proving Proposition \ref{prop:homogeneous}. To  this end, we need the following (optimal) quantum hypercontractivity and its consequence, which was established by Montanaro and Osborne \cite[Theorem~46]{MO2010}.
\begin{theorem}[Montanaro--Osborne]\label{quantum hyper}
	For every $X\in \mathcal{M}_{2^n}$ and $2<p<\infty$, the following inequality holds
	\begin{equation*}
		\norm{P_{t}X}_{p}\leq\norm{X}_{2}\Longleftrightarrow t\geq\frac12\log(p-1).
	\end{equation*}
\end{theorem}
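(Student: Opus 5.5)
The plan is to follow the classical Bonami--Gross strategy: a single-qubit two-point inequality, then a tensorization step. The tensorization step is the genuinely noncommutative one. For necessity, we test on a small perturbation of the identity.

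\emph{Necessity.} Take $X=\one+\varepsilon\sigma_{3e_1}$. Then $P_tX=\one+\varepsilon e^{-t}\sigma_{3e_1}$ is diagonal with eigenvalues $1\pm\varepsilon e^{-t}$, each with weight $\tfrac12$, so the computation is purely commutative. Expanding to second order gives
\[
\norm{P_tX}_p^p=1+\tfrac{p(p-1)}{2}\varepsilon^2e^{-2t}+O(\varepsilon^3),
\qquad
\norm{X}_2^p=1+\tfrac p2\varepsilon^2+O(\varepsilon^4).
\]
Hence $\norm{P_tX}_p\le\norm X_2$ for all small $\varepsilon$ forces $(p-1)e^{-2t}\le1$, that is, $t\ge\frac12\log(p-1)$.

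\emph{Sufficiency for one qubit.} Here $P_t(A)=e^{-t}A+(1-e^{-t})\tr(A)\one$. For self-adjoint $A=\alpha\one+\vec b\cdot\vec\sigma$, the eigenvalues are $\alpha\pm|\vec b|$, each with weight $\tfrac12$. Moreover $P_tA$ has eigenvalues $\alpha\pm e^{-t}|\vec b|$ and $\norm A_2^2=\alpha^2+|\vec b|^2$. Thus the inequality is exactly the classical two-point Bonami--Gross inequality for the function $x\mapsto\alpha+|\vec b|x$ on $\{-1,1\}$. For general (non-self-adjoint) $A$, one route is to write $|P_tA|^2$ in the Pauli basis and reduce again to a two-point inequality. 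Alternatively, one can use that the $2\to p$ norm of a completely positive unital map is attained at positive elements. By a standard Schwarz-type argument, $\norm{P_tA}_p\le\norm{P_t|A|}_p$ up to passing to $|A|$; if that route fails, I would use the $2\times2$ block trick (embed $A$ into the self-adjoint $\bigl(\begin{smallmatrix}0&A\\A^*&0\end{smallmatrix}\bigr)$ and check the depolarizing map extends).

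\emph{Tensorization (main obstacle).} Since $P_t=\mathcal P_t^{\otimes n}$ with $\mathcal P_t$ the one-qubit depolarizing channel, I need
\[
\norm{\mathcal P_t^{\otimes n}}_{2\to p}\le1.
\]
Unlike the commutative case, Minkowski's integral inequality is not available, so iterated norms do not trivially reduce. I would invoke King's multiplicativity theorem: for a unital qubit channel $\Phi$ and an arbitrary channel $\Psi$,
\[
\norm{\Phi\otimes\Psi}_{2\to p}=\norm{\Phi}_{2\to p}\norm{\Psi}_{2\to p}.
\]
By induction on $n$, this gives $\norm{P_t}_{2\to p}\le\norm{\mathcal P_t}_{2\to p}^n\le1$. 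If one prefers a self-contained argument, the alternative is Gross's equivalence: prove the one-qubit $L_2$ log-Sobolev inequality with constant $1$ and tensorize it. Tensorization of entropy for the depolarizing generator on $M_2^{\otimes n}$ still needs a noncommutative subadditivity of entropy, namely strong subadditivity with respect to the conditional expectations $E_j$. Gross's differential argument then converts the resulting log-Sobolev inequality back into $2\to p$ hypercontractivity for $t\ge\frac12\log(p-1)$. I expect this entropy-tensorization step (or, equivalently, King's multiplicativity) to be the heart of the proof.
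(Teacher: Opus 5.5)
Your necessity computation and the reduction of the one-qubit case to the two-point Bonami--Gross inequality are correct. The paper does not reprove this theorem; it cites Montanaro--Osborne.

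The genuine gap is the tensorization. It is the entire content of the statement, and your proposal does not supply it.

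\begin{itemize}
\item King's multiplicativity theorem for unital qubit channels concerns maximal output $p$-norms, i.e.\ $\|\cdot\|_{1\to p}$ over states. It says nothing about $2\to p$ norms.
\item A $2\to p$ contraction bound for tensor powers of the qubit depolarizing channel is exactly the Montanaro--Osborne theorem, later extended by King to products of unital qubit channels. Invoking it is invoking the statement itself, and your base case then plays no role.
\item The log-Sobolev fallback does not close the gap either. Strong subadditivity (quantum Han's inequality) does give $\tau(\rho\log\rho)-\tau(\rho)\log\tau(\rho)\le\sum_j\bigl(\tau(\rho\log\rho)-\tau(E_j(\rho)\log E_j(\rho))\bigr)$. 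But the conditional one-site step then needs a log-Sobolev inequality for the one-qubit generator tensored with the identity on the other $n-1$ qubits. That is a matrix-valued (complete) inequality, and it does not follow from the scalar one-qubit log-Sobolev inequality. This is precisely why quantum log-Sobolev constants do not tensorize automatically.
\item Going back from an $L_2$ log-Sobolev inequality to $2\to p$ hypercontractivity also needs $L_p$-regularity of the Dirichlet form in the noncommutative setting. Gross's differentiation argument alone is not enough.
\end{itemize}

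The missing ingredient behind the cited result is King's norm inequality for positive $2\times2$ block matrices. If $M=\begin{pmatrix}A&B\\B^*&C\end{pmatrix}\ge0$ and $p\ge2$, then $\|M\|_p\le\|m\|_p$, where $m=\begin{pmatrix}\|A\|_p&\|B\|_p\\\|B\|_p&\|C\|_p\end{pmatrix}$. The inequality reverses for $1\le p\le2$, which gives the dual $p\to2$ form. With it the induction goes as follows.

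\begin{enumerate}
\item Reduce to positive $X$. Since $P_t$ is completely positive, $\begin{pmatrix}P_t|X^*|&P_tX\\(P_tX)^*&P_t|X|\end{pmatrix}\ge0$, hence $\|P_tX\|_p\le\|P_t|X^*|\|_p^{1/2}\|P_t|X|\|_p^{1/2}$. This geometric mean, not $\|P_t|X|\|_p$, is what the Schwarz-type argument actually gives. Your $2\times2$ embedding would instead need a bound for $\mathrm{id}_{M_2}\otimes P_t$, which is not an instance of the $(n+1)$-qubit statement.
\item Split $X\ge0$ along the first qubit and put $r=e^{-t}$. The blocks of $P_tX$ are $P^{(n-1)}_t\bigl(\tfrac{1+r}{2}A+\tfrac{1-r}{2}C\bigr)$, $rP^{(n-1)}_tB$ and $P^{(n-1)}_t\bigl(\tfrac{1-r}{2}A+\tfrac{1+r}{2}C\bigr)$.
\item Apply King's inequality. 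Then apply the induction hypothesis to each block, using the triangle inequality on the diagonal blocks and the reduction of step 1 for $B$.
\item Use monotonicity of the $p$-norm of a positive $2\times2$ matrix in its nonnegative entries.
\item Finish by applying your one-qubit inequality to the positive scalar matrix $\begin{pmatrix}\|A\|_2&\|B\|_2\\\|B\|_2&\|C\|_2\end{pmatrix}$, whose normalized $2$-norm equals $\|X\|_2$.
\end{enumerate}

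The last step is where your base case genuinely enters.
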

As a consequence of Theorem \ref{quantum hyper}, the following comparison of moments theorem holds (see \cite[Corollary~51]{MO2010}), which will play an essential role of deriving the upper bound \eqref{strong}.

\begin{lemma}[Montanaro--Osborne]\label{lem:low-degree}
	For $X\in \mathcal{M}_{2^n}$ with $\deg X\leq m$, then for every $2\leq p<\infty$ we have
	\begin{equation}\label{eq:low-degree}
		\norm{X}_{p}\leq(p-1)^{m/2}\norm{X}_{2}.
	\end{equation}
\end{lemma}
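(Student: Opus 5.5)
The plan is to deduce the moment comparison directly from the hypercontractivity of Theorem~\ref{quantum hyper} by running the heat semigroup backwards on the finite-dimensional space of low-degree operators. The case $p=2$ is trivial, since $(p-1)^{m/2}=1$. So fix $2<p<\infty$ and set $t:=\frac12\log(p-1)>0$.

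Given $X$ with $\deg X\le m$, define
\[
Y:=\sum_{\mathbf{s}}e^{t|\mathbf{s}|}\widehat X(\mathbf{s})\sigma_{\mathbf{s}}=e^{t\Delta}X .
\]
This is legitimate because $\mathcal{M}_{2^n}$ is finite dimensional and $\Delta$ is diagonal in the Pauli basis. Since $P_t\sigma_{\mathbf{s}}=e^{-t|\mathbf{s}|}\sigma_{\mathbf{s}}$, we have $P_tY=X$.

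With this choice of $t$, the sufficiency direction of Theorem~\ref{quantum hyper} gives
\[
\norm{X}_p=\norm{P_tY}_p\le\norm{Y}_2 .
\]

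It remains to bound $\norm{Y}_2$ by Parseval's identity. Only multi-indices with $|\mathbf{s}|\le m$ appear in the expansion of $X$, so
\[
\norm{Y}_2^2=\sum_{|\mathbf{s}|\le m}e^{2t|\mathbf{s}|}|\widehat X(\mathbf{s})|^2\le e^{2tm}\norm{X}_2^2=(p-1)^m\norm{X}_2^2 .
\]
Taking square roots yields \eqref{eq:low-degree}.

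There is no real obstacle here. The only point needing care is that Theorem~\ref{quantum hyper} is applied to $Y$, which is an arbitrary element of $\mathcal{M}_{2^n}$. No self-adjointness or degree hypothesis on $Y$ is required for that theorem, so the application is valid.
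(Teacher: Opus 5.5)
Your argument is correct and is the standard deduction the paper has in mind when it cites \cite[Corollary~51]{MO2010} as a consequence of Theorem~\ref{quantum hyper}: apply hypercontractivity at $t=\frac12\log(p-1)$ to $e^{t\Delta}X$, then use Parseval to get $\norm{e^{t\Delta}X}_2\le e^{tm}\norm{X}_2=(p-1)^{m/2}\norm{X}_2$. If you only want to use hypercontractivity for self-adjoint operators, note that $e^{t\Delta}X$ is self-adjoint whenever $X$ is, and this covers every use of the lemma in the paper.
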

\noindent Combining the noncommutative H\"older inequality with Lemma \ref{eq:low-degree}, we get the following upper bound for product.

\begin{lemma}\label{lem:ordinary-product}
	If $X$,$Y\in \mathcal{M}_{2^n}$ are self-adjoint with $\deg X\leq K$, $\deg Y\leq d$, and $1\leq d\leq K$, then 
	\begin{equation*}
		\norm{XY}_{2}\leq C_{K,d}^{1/2}\norm{X}_{2}\norm{Y}_{2}.
	\end{equation*}
\end{lemma}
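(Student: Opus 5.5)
The plan is to apply the noncommutative H\"older inequality with a pair of conjugate exponents chosen according to the degree ratio, and then control each factor by Lemma~\ref{lem:low-degree}. Concretely, for $2< p,q<\infty$ with $\frac1p+\frac1q=\frac12$ we have
\[
\norm{XY}_2\le\norm{X}_p\norm{Y}_q\le (p-1)^{K/2}(q-1)^{d/2}\norm{X}_2\norm{Y}_2 .
\]
The whole proof therefore reduces to choosing $p,q$ so that $(p-1)^{K}(q-1)^{d}\le C_{K,d}=(16K/d)^d$.

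Since $K\ge d$, the factor carrying the higher degree $K$ should sit in the exponent close to $2$. I will take $p-1=1+\varepsilon$, so $p=2+\varepsilon$, with $\varepsilon=d/K\in(0,1]$. Then:
\begin{itemize}
\item For the first factor, $(p-1)^{K}=(1+d/K)^K\le e^{d}$.
\item From $\frac1q=\frac12-\frac1{2+\varepsilon}=\frac{\varepsilon}{2(2+\varepsilon)}$ we get $q=\frac{2(2+\varepsilon)}{\varepsilon}\le \frac{6}{\varepsilon}=\frac{6K}{d}$, hence $q-1\le 6K/d$.
\item Combining, $(p-1)^K(q-1)^d\le (6eK/d)^d\le(16.4K/d)^d$, since $6e\approx16.31$.
\end{itemize}
This is just slightly too large, so I tighten the second estimate: $q-1=\frac{4+\varepsilon}{\varepsilon}=\frac{4K}{d}+1\le \frac{5K}{d}$, using $K/d\ge1$. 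Then $(p-1)^K(q-1)^d\le(5eK/d)^d\le(16K/d)^d$, because $5e<13.6$. Taking square roots gives the claim.

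Self-adjointness is not actually needed beyond the inputs being elements of $\mathcal M_{2^n}$. The edge case $p=2$ cannot occur because $\varepsilon>0$.

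The only delicate point is the bookkeeping of constants in the choice of $p$. I also need to note that Lemma~\ref{lem:low-degree} is applied to $Y$ with exponent $q$, which is finite and at least $2$, and to $X$ with $p=2+d/K\in(2,3]$. Both applications are legitimate.
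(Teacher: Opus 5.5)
Your proof is correct and is essentially the paper's argument: your choice $p=2+d/K$ forces $q=2+4K/d$, which are exactly the paper's exponents. The constants are then handled the same way, via noncommutative H\"older, Lemma~\ref{lem:low-degree}, $(1+d/K)^K\le e^d$ and $(1+4K/d)^d\le(5K/d)^d$, giving $(5eK/d)^d\le(16K/d)^d$.
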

\begin{proof}
	Choose
	\begin{equation*}
		p=2+\frac{d}{K},\qquad q=2+\frac{4K}{d}.
	\end{equation*}
	It is clear that $p,~q\geq 2$, and
	\[
	\frac{1}{p}+\frac{1}{q}=\frac{K}{2K+d}+\frac{d}{2d+4K}=\frac12.
	\]
	Then, by the noncommutative H\"older inequality  and Lemma \ref{lem:low-degree}, we obtain
\[
\norm{XY}_{2}\leq\|X\|_p\|Y\|_q\leq(p-1)^{K/2}(q-1)^{d/2}\norm{X}_{2}\norm{Y}_{2}.
\]
	Note that
	\begin{align*}
		(p-1)^{K}(q-1)^{d}
		&=\left(1+\frac{d}{K}\right)^{K}
		\left(1+\frac{4K}{d}\right)^{d}\\
		&\leq e^{d}\left(\frac{5K}{d}\right)^{d}
		\leq\left(\frac{16K}{d}\right)^{d}=C_{K,d}.
	\end{align*}
	where we used $(1+u)^{1/u}\leq e$ in the first inequality. The desired assertion follows. 
\end{proof}

To provide an upper bound for commutators, we introduce the family of mappings to interpolate $AB$ and $BA$ for $A$, $B\in \mathcal{M}_{2^n}$. Note first that
\[
(\sigma^{1})^{\mathsf{T}}=\sigma^{1},\quad(\sigma^{2})^{\mathsf{T}}=-\sigma^{2},\quad (\sigma^{3})^{\mathsf{T}}=\sigma^{3},
\]
where $A^{\mathsf{T}}$ stands for the transport of the matrix $A$. By this observation, for $S\subseteq[n]$, define the \emph{partial transpose} $\Theta_{S}$ as follows

\begin{equation*}
	\Theta_{S}(\sigma_{\mathbf{s}})
	=(-1)^{|\{j\in S:s_{j}=2\}|}\sigma_{\mathbf{s}},\qquad \forall~\mathbf{s}\in \{0,1,2,3\}^{n}.
\end{equation*}
It is clear that $\Theta_{S}^{2}=\id$ and $\Theta_{S}$ preserves self-adjointness and the degree on $L_{2}(\mathcal{M}_{2^{n}})$. By the Parseval identity, it follows that
\begin{equation}\label{eq:partial-transpose-isometry}
	\norm{\Theta_{S}(A)}_{2}=\norm{A}_{2}
	\quad\text{for every }A\in\mathcal{M}_{2^n}.
\end{equation}

With the help of partial transport mappings, we now interpolate $AB$ and $BA$ in the following manner. For each $S\subseteq [n]$, define a linear product (with respect to $S$) by
\begin{equation*}
	A\star_{S}B=\Theta_{S}\!\left(\Theta_{S}(A)\Theta_{S}(B)\right).
\end{equation*}
For each \[A=\bigotimes_{j=1}^nA_j,\qquad
B=\bigotimes_{j=1}^nB_j,\]
where $\{A_{j}\}_{j=1}^{n}$, $\{B_{j}\}_{j=1}^{n}\subseteq M_2(\mathbb{C})$, it is not hard to check that
\begin{equation}\label{eq:star-tensors}
	A\star_{S}B=\bigotimes_{j=1}^{n}C_{j},\qquad C_{j}=
	\begin{cases}B_{j}A_{j},&j\in S,\\
		A_{j}B_{j},&j\notin S.
	\end{cases}
\end{equation}
Thus this product reverses the local multiplication order precisely on
$S$. Particularly, we have
\begin{equation*}
	A\star_{\emptyset}B=AB,\qquad
	A\star_{[n]}B=BA
	\quad\text{for all }A,B\in\mathcal{M}_{2^n}.
\end{equation*}

We here collect a basic estimate. 
\begin{lemma}\label{star convolution}
	Let $X$,$Y\in \mathcal{M}_{2^n}$ be self-adjoint elements with $\deg X\leq K$, $\deg Y\leq d$, and $1\leq d\leq K$, then the following holds for every $S\subseteq [n]$,
	\begin{equation*}
		\norm{X\star_{S}Y}_{2}\leq C_{K,d}^{1/2}\norm{X}_{2}\norm{Y}_{2}.
	\end{equation*}
\end{lemma}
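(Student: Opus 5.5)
The estimate reduces directly to Lemma~\ref{lem:ordinary-product}. By definition,
\[
X\star_{S}Y=\Theta_{S}\bigl(\Theta_{S}(X)\Theta_{S}(Y)\bigr),
\]
so the isometry property \eqref{eq:partial-transpose-isometry} gives
\[
\norm{X\star_{S}Y}_{2}=\norm{\Theta_{S}(X)\Theta_{S}(Y)}_{2}.
\]
It therefore suffices to check that $X':=\Theta_{S}(X)$ and $Y':=\Theta_{S}(Y)$ satisfy the hypotheses of Lemma~\ref{lem:ordinary-product} with the same parameters $K,d$, and with the same $L_2$ norms as $X$ and $Y$.

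The degree and norm conditions are immediate from the Pauli expansion. $\Theta_S$ acts diagonally on the basis $(\sigma_{\mathbf{s}})_{\mathbf{s}}$ with eigenvalues $\pm1$. Hence $\widehat{X'}(\mathbf{s})=\pm\widehat{X}(\mathbf{s})$, so $\deg X'=\deg X\le K$ and $\deg Y'=\deg Y\le d$. By Parseval, $\norm{X'}_2=\norm{X}_2$ and $\norm{Y'}_2=\norm{Y}_2$.

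Self-adjointness holds because the signs are real and each $\sigma_{\mathbf{s}}$ is self-adjoint. For self-adjoint $X$ the coefficients $\widehat{X}(\mathbf{s})$ are real, so the coefficients of $X'$ are again real and $X'$ is self-adjoint; the same applies to $Y'$. (Equivalently, $\Theta_S$ is the partial transpose on the qubits in $S$, which commutes with the adjoint.) This is the only point needing a comment, and it is the preservation of self-adjointness already noted in the text.

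Applying Lemma~\ref{lem:ordinary-product} to $X',Y'$ then yields
\[
\norm{X\star_SY}_2=\norm{X'Y'}_2\le C_{K,d}^{1/2}\norm{X'}_2\norm{Y'}_2=C_{K,d}^{1/2}\norm{X}_2\norm{Y}_2,
\]
which holds uniformly in $S\subseteq[n]$. I expect no real obstacle: the hypercontractive content is already contained in Lemma~\ref{lem:ordinary-product}, and the remaining steps are bookkeeping.
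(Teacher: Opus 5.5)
Your proposal is correct and matches the paper's proof: drop the outer $\Theta_S$ via the isometry \eqref{eq:partial-transpose-isometry}, note that $\Theta_S$ only changes signs of Pauli coefficients (so degree, $L_2$ norm and self-adjointness are preserved), and apply Lemma~\ref{lem:ordinary-product} to $\Theta_S(X),\Theta_S(Y)$. Your explicit check of self-adjointness is a detail the paper leaves implicit; it is harmless, since the proof of Lemma~\ref{lem:ordinary-product} never uses that hypothesis.
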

\begin{proof}
	Note that	$\Theta_{S}(X)$ preserves 
	the degree of the given $X\in \mathcal{M}_{2^n}$. 
	It follows from \eqref{eq:partial-transpose-isometry} and Lemma \ref{lem:ordinary-product} that 
	\[
	\norm{X\star_{S}Y}_{2}=\norm{\Theta_{S}(X)\Theta_{S}(Y)}_{2}\leq C_{K,d}^{1/2}\norm{\Theta_{S}(X)}_{2}\norm{\Theta_{S}(Y)}_{2}=C_{K,d}^{1/2}\norm{X}_{2}\norm{Y}_{2}.
	\]
\end{proof}

For each $j\in [n]$, we define ($X\star_{[0]}Y\coloneqq X\star_{\emptyset}Y$ for convenience)
\begin{equation*}
	\mathcal{B}_j(X,Y):=X\star_{[j-1]}Y-X\star_{[j]}Y, \qquad 1\leq j\leq n,
\end{equation*}

\begin{lemma}\label{lem:localization}
	Let $X$,$Y\in \mathcal{M}_{2^n}$ be self-adjoint elements with $\deg X\leq K$, $\deg Y\leq d$, and $1\leq d\leq K$. Then 
	\begin{enumerate}[{\rm (i)}]
		\item  We have $[X,Y]=\sum_{j=1}^{n}\mathcal{B}_j(X,Y),\label{eq:telescoping}$.
		\item For each $j\in [n]$, $\mathcal{B}_j(X,Y)=(D_{j}X)\star_{[j-1]}(D_{j}Y)-(D_{j}X)\star_{[j]}(D_{j}Y)$.
		\item For each $j\in [n]$, $E_{j}(\mathcal{B}_j(X,Y))=0$.
		\item For each $j\in [n]$,
		$\norm{\mathcal{B}_j(X,Y)}_{2}\leq2 C_{K,d}^{1/2}
		\norm{D_{j}X}_{2}\norm{D_{j}Y}_{2}$.
	\end{enumerate}
\end{lemma}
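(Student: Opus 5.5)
We will prove the four items in order, reducing everything to elementary tensors by bilinearity and then using the explicit formula \eqref{eq:star-tensors}.

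For (i), the argument is a telescoping sum. Since $\star_{[0]}=\star_{\emptyset}$ gives $XY$ and $\star_{[n]}$ gives $YX$, we have
\[
\sum_{j=1}^n\mathcal{B}_j(X,Y)=X\star_{\emptyset}Y-X\star_{[n]}Y=XY-YX=[X,Y].
\]

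For (ii), bilinearity of $(A,B)\mapsto A\star_S B$ reduces the claim to Pauli monomials $A=\sigma_{\mathbf{s}}$ and $B=\sigma_{\mathbf{t}}$. By \eqref{eq:star-tensors}, the products $\sigma_{\mathbf{s}}\star_{[j-1]}\sigma_{\mathbf{t}}$ and $\sigma_{\mathbf{s}}\star_{[j]}\sigma_{\mathbf{t}}$ agree in every tensor factor except the $j$-th. There one has $\sigma^{s_j}\sigma^{t_j}$ and the other has $\sigma^{t_j}\sigma^{s_j}$. These coincide whenever $s_j=0$ or $t_j=0$, so the difference vanishes unless both $s_j\neq0$ and $t_j\neq0$. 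We now write $X=D_jX+E_jX$ and $Y=D_jY+E_jY$ and expand $\mathcal{B}_j$ bilinearly. Any term in which $E_jX$ or $E_jY$ appears involves only monomials with $s_j=0$ or $t_j=0$, so it drops out. Only the $(D_jX,D_jY)$ term survives, which is (ii).

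For (iii), we use (ii) and compute on monomials with $s_j,t_j\neq0$. The $j$-th factor of the difference is $\sigma^{s_j}\sigma^{t_j}-\sigma^{t_j}\sigma^{s_j}=[\sigma^{s_j},\sigma^{t_j}]=2i\sum_c\epsilon_{s_jt_jc}\sigma^c$. This factor is traceless, so $\mathcal{E}$ annihilates it. Since $E_j$ acts only on the $j$-th factor, it kills each such term, and linearity gives $E_j(\mathcal{B}_j(X,Y))=0$.

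For (iv), we apply the triangle inequality to (ii) and then Lemma~\ref{star convolution} to each of the two terms. This requires $D_jX$ and $D_jY$ to be self-adjoint with degrees at most $K$ and $d$. Self-adjointness holds because $D_j$ has real multipliers on the Pauli basis and the coefficients of a self-adjoint element are real. The degree bounds hold because $D_j$ only deletes Pauli terms. The result is $\norm{\mathcal{B}_j(X,Y)}_2\le 2C_{K,d}^{1/2}\norm{D_jX}_2\norm{D_jY}_2$.

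None of these steps is a serious obstacle. The only points needing care are the bookkeeping in (ii), namely checking that all cross terms vanish, and confirming in (iv) that the hypotheses of Lemma~\ref{star convolution} survive the application of $D_j$.
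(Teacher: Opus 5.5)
Your proposal is correct and follows the paper's proof in all four items: reduction to elementary tensors via \eqref{eq:star-tensors}, the vanishing of $\mathcal{B}_j$ whenever $E_jX$ or $E_jY$ is inserted, the tracelessness of the commutator in the $j$-th tensor slot, and the triangle inequality combined with Lemma~\ref{star convolution} applied to $D_jX$ and $D_jY$. Two details are small refinements of the same argument: you telescope (i) directly from $\star_{\emptyset}$ and $\star_{[n]}$, and you explicitly check that $D_j$ preserves self-adjointness.
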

\begin{proof}
	It suffices to verify (i) for $X=\bigotimes_{k=1}^{n}A_{k}$ and $Y=\bigotimes_{k=1}^{n}B_{k}$. For each $j\in [n]$, it follows from \eqref{eq:star-tensors} that 
	\begin{equation}\label{eq:elementary-difference}
		\mathcal{B}_j(X,Y)=X\star_{[j-1]}Y-X\star_{[j]}Y=\left(\bigotimes_{k<j}B_{k}A_{k}\right)
		\otimes[A_{j},B_{j}]
		\otimes\left(\bigotimes_{k>j}A_{k}B_{k}\right).
	\end{equation}
	Summing \eqref{eq:elementary-difference} over $j\in [n]$ yields \eqref{eq:telescoping}. 
	
	Note that $X=E_{j}X+D_{j}X$ and $Y=E_{j}Y+D_{j}Y$; the partial trace is defined in \eqref{eq:Ej}. 
	By \eqref{eq:elementary-difference}, we know that 
	\[\mathcal{B}_j(E_jX,Y)=0,\quad \mathcal{B}_j(X,E_jY)=0.\]
	Then (ii) follows.
	
	Since $E_{j}$ is a partial trace, it follows from  \eqref{eq:elementary-difference} that 
	\[
	E_{j}(\mathcal{B}_j(X,Y))=\frac{1}{2}\mathrm{Tr}([A_{j},B_{j}])\left(\bigotimes_{k<j}B_{k}A_{k}\right)\otimes\mathbf{1}_{2}\otimes\left(\bigotimes_{k>j}A_{k}B_{k}\right)=0,
	\]
	which verifies (iii). 
	
	We note here that $D_{j}X$ and $D_{j}Y$ satisfy $\deg D_{j}X\leq K$, $\deg D_{j}Y\leq d$. Applying Lemma \ref{star convolution}, we have
	\[
	\|\mathcal{B}_j(X,Y)\|_{2}\leq \left\|(D_{j}X)\star_{[j-1]}(D_{j}Y)\right\|_{2}+\left\|(D_{j}X)\star_{[j]}(D_{j}Y)\right\|_{2}\leq 2 C_{K,d}^{1/2}\norm{D_{j}X}_{2}\norm{D_{j}Y}_{2}.
	\]
	This proves (iv) and completes the proof.
\end{proof}
In order to estimate the noncommutative $L_{2}$-norm from above, we need the following decomposition lemma.

\begin{lemma}\label{lem:pairing}
	For self-adjoint $X$, $Y\in \mathcal{M}_{2^n}$, we have
	\begin{equation*}
		\norm{[X,Y]}_{2}^{2}
		=2\Rea\sum_{j=1}^{n}\ip{D_{j}(XY)}{\mathcal{B}_j(X,Y)}.
	\end{equation*}
	Consequently, the Cauchy-Schwarz inequality entails
	\begin{equation*}
		\norm{[X,Y]}_{2}^{2}\leq2\sum_{j=1}^{n}\norm{D_{j}(XY)}_{2}\norm{\mathcal{B}_j(X,Y)}_{2}.
	\end{equation*}
\end{lemma}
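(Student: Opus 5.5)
The plan is to reduce the identity to a single global inner product and then localize it term by term using Lemma~\ref{lem:localization}. Throughout, the inner product is $\ip{A}{B}=\tr(A^*B)$, as in the proof of Lemma~\ref{lem:adS}(i).

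First I will express $\norm{[X,Y]}_2^2$ through $Z:=XY$. Since $X$ and $Y$ are self-adjoint, $Z^*=YX$, so $[X,Y]=Z-Z^*$. Expanding gives
\[
\norm{Z-Z^*}_2^2=\tr\bigl((Z^*-Z)(Z-Z^*)\bigr)=2\tr(Z^*Z)-\tr(Z^*Z^*)-\tr(ZZ).
\]
On the other hand,
\[
\ip{Z}{Z-Z^*}=\tr(Z^*Z)-\tr(Z^*Z^*).
\]
Because $\tr(ZZ)=\overline{\tr(Z^*Z^*)}$ and $\tr(Z^*Z)$ is real, the first expression equals twice the real part of the second. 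Hence
\[
\norm{[X,Y]}_2^2=2\Rea\ip{XY}{[X,Y]}.
\]

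Next I will insert the telescoping decomposition of Lemma~\ref{lem:localization}(i), $[X,Y]=\sum_{j=1}^n\mathcal{B}_j(X,Y)$, which gives
\[
\norm{[X,Y]}_2^2=2\Rea\sum_{j=1}^n\ip{XY}{\mathcal{B}_j(X,Y)}.
\]
To localize each term, I use Lemma~\ref{lem:localization}(iii), namely $E_j\mathcal{B}_j(X,Y)=0$. Since $D_j=\mathbb{I}^{\otimes n}-E_j$, this means $\mathcal{B}_j(X,Y)=D_j\mathcal{B}_j(X,Y)$. By \eqref{eq:Ej-on-Pauli}, $D_j$ acts diagonally with real eigenvalues on the orthonormal Pauli basis, so it is a self-adjoint orthogonal projection on $L_2(\mathcal{M}_{2^n},\tr)$. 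Therefore
\[
\ip{XY}{\mathcal{B}_j(X,Y)}=\ip{XY}{D_j\mathcal{B}_j(X,Y)}=\ip{D_j(XY)}{\mathcal{B}_j(X,Y)},
\]
which proves the identity.

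The consequence then follows from $\Rea w\le|w|$ and the Cauchy--Schwarz inequality applied to each summand. The only delicate point is the first step: one must check that the cross terms $\tr(ZZ)$ and $\tr(Z^*Z^*)$ combine into a real part rather than cancel incorrectly. This uses self-adjointness of $X$ and $Y$ only through the relation $Z^*=YX$.
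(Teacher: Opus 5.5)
Your proposal is correct and follows essentially the same route as the paper. The only difference is cosmetic: you reduce to $2\Rea\ip{XY}{[X,Y]}$ by expanding the trace directly, whereas the paper uses $\norm{XY}_2=\norm{YX}_2$ and the polarization expansion. From there both arguments telescope via Lemma~\ref{lem:localization}(i) and move $D_j$ across the inner product using Lemma~\ref{lem:localization}(iii) and $D_j=D_j^*$.
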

\begin{proof}
	By the self-adjointness, we have $\|XY\|_{2}=\|YX\|_{2}$. Hence, the parallelogram law of $L_{2}(\mathcal{M}_{2^n})$-norm yields
	\[
	\norm{[X,Y]}_{2}^{2}=\norm{XY}_{2}^{2}+\norm{YX}_{2}^{2}-2\Rea\ip{XY}{YX}=2\Rea\ip{XY}{XY-YX}=2\Rea\ip{XY}{[X,Y]}.
	\]
	By Lemma \ref{lem:localization} (iii), $D_{j}(\mathcal{B}_j(X,Y))=\mathcal{B}_j(X,Y)-E_j\mathcal{B}_j(X,Y)=\mathcal{B}_j(X,Y)$. By Lemma \ref{lem:localization} (i), we have
	\[
	\ip{XY}{[X,Y]}=\sum_{j=1}^{n}\ip{XY}{\mathcal{B}_j(X,Y)}=\sum_{j=1}^{n}\ip{XY}{D_{j}\mathcal{B}_j(X,Y)}=\sum_{j=1}^{n}\ip{D_{j}(XY)}{\mathcal{B}_j(X,Y)},
	\]
	where the last equality is due the fact that $D_j=D_j^*$ for each $j$. 
\end{proof}

To provide an $L_{2}(\mathcal{M}_{2^n})$ upper bound for the commutator $[X,Y]$, by Lemma \ref{lem:pairing}, it suffices to bound the term $\|D_{j}(XY)\|_{2}$ appropriately.
\begin{lemma}\label{lem:product-derivative}
	Let $X$,$Y\in \mathcal{M}_{2^n}$ be self-adjoint elements with $\deg X\leq K$, $\deg Y\leq d$, and $1\leq d\leq K$. Then for each $j\in [n]$ we have
	\begin{equation*}
		\norm{D_{j}(XY)}_{2}
		\leq C_{K,d}^{1/2}
		\left(\norm{D_{j}X}_{2}\norm{Y}_{2}
		+\norm{X}_{2}\norm{D_{j}Y}_{2}\right).
	\end{equation*}
\end{lemma}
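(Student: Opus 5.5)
The plan is to use a Leibniz-type decomposition of $D_j(XY)$ adapted to the splitting $X=E_jX+D_jX$, $Y=E_jY+D_jY$, and then apply Lemma \ref{lem:ordinary-product} to each resulting product.

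First, expand
\[
XY=(E_jX)(E_jY)+(D_jX)(E_jY)+(E_jX)(D_jY)+(D_jX)(D_jY).
\]
Since $E_jX$ and $E_jY$ act as the identity on the $j$-th tensor factor, so does their product. Hence $D_j\bigl((E_jX)(E_jY)\bigr)=0$, and
\[
D_j(XY)=D_j\bigl((D_jX)(E_jY)+(E_jX)(D_jY)+(D_jX)(D_jY)\bigr).
\]
A cleaner regrouping avoids three separate terms. Write $(D_jX)(E_jY)+(D_jX)(D_jY)=(D_jX)Y$, so that
\[
D_j(XY)=D_j\bigl((D_jX)Y\bigr)+D_j\bigl((E_jX)(D_jY)\bigr).
\]
Since $D_j$ is an orthogonal projection on $L_2$, it is a contraction, and therefore
\[
\norm{D_j(XY)}_2\le \norm{(D_jX)Y}_2+\norm{(E_jX)(D_jY)}_2 .
\]

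Next we apply Lemma \ref{lem:ordinary-product} to each product. Its hypotheses are self-adjointness and the degree bounds, so we check them:
\begin{itemize}
\item $D_j$ and $E_j$ preserve self-adjointness, since they act diagonally on the Pauli basis with real multipliers and the Pauli coefficients of a self-adjoint element are real.
\item They do not increase degree, by \eqref{eq:Ej-on-Pauli}. Thus $\deg D_jX\le K$, $\deg E_jX\le K$ and $\deg D_jY\le d$.
\end{itemize}
(The lemma's proof via Hölder actually works for arbitrary operators with these degree bounds, so self-adjointness is not essential.)

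Then Lemma \ref{lem:ordinary-product} gives
\[
\norm{(D_jX)Y}_2\le C_{K,d}^{1/2}\norm{D_jX}_2\norm{Y}_2,\qquad \norm{(E_jX)(D_jY)}_2\le C_{K,d}^{1/2}\norm{E_jX}_2\norm{D_jY}_2 .
\]
Finally, $\norm{E_jX}_2\le\norm{X}_2$ because $E_j$ is also an orthogonal projection. Combining these bounds yields the claimed estimate.

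The main point needing care is the degenerate case where one factor has degree $0$, for example $D_jY=0$; this is harmless because the corresponding term simply vanishes. The other point to watch is that the split must put the degree-$d$ factor in the $q$-slot, which is exactly the order in which Lemma \ref{lem:ordinary-product} is stated. Beyond that there is no real obstacle.
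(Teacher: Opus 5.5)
Your proof is correct and is essentially the paper's argument: the same splitting $D_j(XY)=D_j\bigl((D_jX)Y+(E_jX)(D_jY)\bigr)$ after discarding $D_j\bigl((E_jX)(E_jY)\bigr)=0$, then contractivity of the orthogonal projection $D_j$, Lemma~\ref{lem:ordinary-product} applied to each product (the degree bounds are preserved by $D_j$ and $E_j$), and finally $\norm{E_jX}_2\le\norm{X}_2$.
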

\begin{proof}
	Fix $j\in [n]$. Note here that
	\[
	X=E_{j}X+D_{j}X\quad\mbox{and}\quad Y=E_{j}Y+D_{j}Y,
	\]
	then the following holds
	\[
	XY=(D_{j}X)Y+(E_{j}X)Y=(D_{j}X)Y+(E_{j}X)(D_{j}Y)+(E_{j}X)(E_{j}Y).
	\]
	Also note that 
	\[D_j((E_{j}X)(E_{j}Y))=(E_{j}X)(E_{j}Y)-E_j((E_{j}X)(E_{j}Y))=0.\]
	Therefore,
	\[
	D_{j}(XY)=D_{j}\left((D_{j}X)Y+(E_{j}X)(D_{j}Y)\right).
	\]
	Since $D_{j}$ is an orthogonal projection on $L_{2}(\mathcal{M}_{2^n})$, it follows that
	\[
	\|D_{j}(XY)\|_{L_{2}}\leq \left\|(D_{j}X)Y+(E_{j}X)(D_{j}Y)\right\|_{2}\leq \|(D_{j}X)Y\|_{2}+\left\|(E_{j}X)(D_{j}Y)\right\|_{2}.
	\]
	Since
	\[
	\deg(D_{j}X)\leq K, \quad\deg(E_{j}X)\leq K,
	\]
	and
	\[
	\deg(D_{j}Y)\leq d, \quad \deg(E_{j}Y)\leq d,
	\]
	it follows from Lemma \ref{lem:ordinary-product} that
	\begin{align*}
		\|D_{j}(XY)\|_{2}&\leq \|(D_{j}X)Y\|_{2}+\|(E_{j}X)(D_{j}Y)\|_{2}\\
		&\leq C_{K,d}^{1/2}\left(\|D_{j}X\|_{2}\|Y\|_{2}+\|E_{j}X\|_{2}\|D_{j}Y\|_{2}\right)\\
		&\leq C_{K,d}^{1/2}\left(\norm{D_{j}X}_{2}\|Y\|_{2}
		+\|X\|_{2}\norm{D_{j}Y}_{2}\right).
	\end{align*}
	This completes our proof.
\end{proof}

Now we are ready to prove Proposition~\ref{prop:homogeneous}. 
\begin{proof}[Proof of Proposition~\ref{prop:homogeneous}]
	Combining Lemma \ref{lem:pairing}, Lemma \ref{lem:localization}(iv), and Lemma \ref{lem:product-derivative}, we have
	
	\begin{equation}\label{eq:before-summation}
		\begin{split}
			\norm{[X,Y]}_{2}^{2}
			&\leq4C_{K,d}\sum_{j=1}^{n}
			\norm{D_{j}X}_{2}\norm{D_{j}Y}_{2}
			\left(\norm{D_{j}X}_{2}\norm{Y}_{2}
			+\norm{X}_{2}\norm{D_{j}Y}_{2}\right)\\
			&=4C_{K,d}\left(
			\norm{Y}_{2}\sum_{j=1}^{n}\norm{D_{j}X}_{2}^{2}\norm{D_{j}Y}_{2}
			+\norm{X}_{2}\sum_{j=1}^{n}\norm{D_{j}X}_{2}\norm{D_{j}Y}_{2}^{2}
			\right).
		\end{split}
	\end{equation}
	For the first sum of \eqref{eq:before-summation}, by
	$0\leq\norm{D_{j}X}_{2}^{2}\leq\beta(X)$ and the
	Cauchy--Schwarz inequality, we obtain
	\begin{equation}\label{eq:first-influence-sum}
		\begin{split}
			\sum_{j=1}^{n}\norm{D_{j}X}_{2}^{2}\norm{D_{j}Y}_{2}
			&\leq\beta(X)^{1/2}\sum_{j=1}^{n}
			\norm{D_{j}X}_{2}\norm{D_{j}Y}_{2}\\
			&\leq\beta(X)^{1/2}
			\left(\sum_{j=1}^{n}\norm{D_{j}X}_{2}^{2}\right)^{1/2}
			\left(\sum_{j=1}^{n}\norm{D_{j}Y}_{2}^{2}\right)^{1/2}\\
			&=\left(\beta(X)\mathrm{Inf}(X)\mathrm{Inf}(Y)\right)^{1/2}.
		\end{split}
	\end{equation}
	For the second sum, it is easy to see that
	\begin{equation}\label{eq:second-influence-sum}
		\sum_{j=1}^{n}\norm{D_{j}X}_{2}\norm{D_{j}Y}_{2}^{2}
		\leq\beta(X)^{1/2}\sum_{j=1}^{n}\norm{D_{j}Y}_{2}^{2}
		=\beta(X)^{1/2}\mathrm{Inf}(Y).
	\end{equation}
	Substituting \eqref{eq:first-influence-sum} and \eqref{eq:second-influence-sum} into \eqref{eq:before-summation} yields
	\[
	\norm{[X,Y]}_{2}^{2}\leq4C_{K,d} \beta(X)^{1/2}\left(\norm{Y}_{2}\sqrt{\mathrm{Inf}(X)\mathrm{Inf}(Y)}+\norm{X}_{2}\mathrm{Inf}(Y)\right),
	\]
	which completes our proof.
\end{proof}

\section{Lower commutator estimates}\label{sec 4}

Throughout this section, we always assume that $T\in\mathcal{M}_{2^n}$ is a nontrivial quantum Boolean function.
For $t>0$ and $m\ge1$, put
\begin{equation*}
  f_t(m)=e^{-tm}-e^{-2tm},
\qquad
W_t(T):=\sum_{\mathbf{s}\ne\mathbf{0}}
f_t(|\mathbf{s}|)\widehat{T}(\mathbf{s})^2.
\end{equation*}
 Define
\[
  \Delta^{-1}\sigma_{\mathbf{s}}\coloneqq\begin{cases}
       0,&\mathbf{s}=\mathbf0,\\
       |\mathbf{s}|^{-1}\sigma_{\mathbf{s}},
         &\mathbf{s}\ne\mathbf0.
     \end{cases}
\]
Then
\begin{equation*}
  \Phi_t(T):=\Delta^{-1}(P_t-P_{2t})T
  =\int_t^{2t}P_s(T-\tr(T)\one)\,ds.
\end{equation*}
Indeed, for $\mathbf{s}\ne\mathbf0$,
\[
\Delta^{-1}(P_t-P_{2t})\sigma_{\mathbf{s}}
=\frac{e^{-t|\mathbf{s}|}-e^{-2t|\mathbf{s}|}}{|\mathbf{s}|}
\sigma_{\mathbf{s}}
=\int_t^{2t}P_s\sigma_{\mathbf{s}}\,ds.
\]
 For \(j\in[n]\) and \(a\in[3]\), let
\(\varepsilon_{ja}\) be independent uniform signs such that
\[
\mathbb P(\varepsilon_{ja}=1)
=\mathbb P(\varepsilon_{ja}=-1)=\frac12.
\]
Set 
\begin{equation}\label{eq: Yd}
Y_{\varepsilon,t}(T)
=\sum_{j=1}^n\sum_{a=1}^3
\varepsilon_{ja}\widetilde{\operatorname{ad}}_{ja}(\Phi_t(T)),
\qquad
Y_{\varepsilon,t,d}(T)=\Pi_{\le d}Y_{\varepsilon,t}(T).
\end{equation}
  Since $T$ is
fixed throughout this section and the remainder of the proof, we use the
abbreviations
\[
  W_t:=W_t(T),\quad \Phi_t:=\Phi_t(T), \quad Y_{\varepsilon,t}:=Y_{\varepsilon,t}(T), \quad  Y_{\varepsilon,t,d}:=Y_{\varepsilon,t,d}(T)
\]
from this point onward.

The main objective of this section is to establish a lower bound for the commutator in Theorem \ref{thm:lower commutator}. To this end, we first introduce a series of lemmas.

\begin{lemma}\label{lem:tests}
For each $t\in(0,\infty)$, we have
\begin{align}
  \sum_{j=1}^n\sum_{a=1}^3
    \ip{\widetilde{\operatorname{ad}}_{ja}(T)}
       {\widetilde{\operatorname{ad}}_{ja}(\Phi_t)}
    &=W_t,\label{eq:test 1}\\
  \sum_{j=1}^n\sum_{a=1}^3
    \norm{\widetilde{\operatorname{ad}}_{ja}(\Phi_t)}_2^2
    &\le tW_t,\label{eq:test 2}\\
  \sum_{j=1}^n\sum_{a=1}^3
    \mathrm{Inf}\bigl(\widetilde{\operatorname{ad}}_{ja}(\Phi_t)\bigr)
    &\le W_t,\label{eq:test 3}\\
  \sum_{j=1}^n\sum_{a=1}^3
    \bigl(\widetilde{\operatorname{ad}}_{ja}(\Phi_t)\bigr)^2
    &\le\frac t2\one.\label{eq:test 4}
\end{align}
Moreover,
\begin{equation}\label{eq:comm-lower}
  \left(
    \sum_{j=1}^n\sum_{a=1}^3
    \norm{\comm{T/2}
      {\widetilde{\operatorname{ad}}_{ja}(\Phi_t)}}_2^2
  \right)^{1/2}
  \ge \frac{W_t}{\sqrt{\mathrm{Inf}(T)}}.
\end{equation}
\end{lemma}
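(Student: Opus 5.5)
The first three estimates are spectral computations in the Pauli basis, and I would do them first. By Lemma~\ref{lem:adja}(iv), $\sum_a\widetilde{\operatorname{ad}}_{ja}^*\widetilde{\operatorname{ad}}_{ja}=D_j$, so summing over $j$ gives $\sum_{j,a}\ip{\widetilde{\operatorname{ad}}_{ja}(A)}{\widetilde{\operatorname{ad}}_{ja}(B)}=\ip{A}{\Delta B}$. With $A=T$ and $B=\Phi_t$, the relation $\Delta\Phi_t=(P_t-P_{2t})(T-\tr(T)\one)$ yields $\sum_{\mathbf{s}\neq0}f_t(|\mathbf{s}|)\widehat T(\mathbf{s})^2=W_t$, which is \eqref{eq:test 1}. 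With $A=B=\Phi_t$ we get
\[
\sum_{\mathbf{s}\ne0}\frac{f_t(|\mathbf{s}|)^2}{|\mathbf{s}|}\widehat T(\mathbf{s})^2 .
\]
Since $0\le f_t(m)\le e^{-tm}(1-e^{-tm})\le 1-e^{-tm}\le tm$, we have $f_t(m)^2/m\le t f_t(m)$, and \eqref{eq:test 2} follows. For \eqref{eq:test 3}, Lemma~\ref{lem:adja}(iii) says $\Delta$ commutes with $\widetilde{\operatorname{ad}}_{ja}$, so $\mathrm{Inf}(\widetilde{\operatorname{ad}}_{ja}\Phi_t)=\ip{\widetilde{\operatorname{ad}}_{ja}\Phi_t}{\widetilde{\operatorname{ad}}_{ja}\Delta\Phi_t}$. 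Summing gives $\ip{\Phi_t}{\Delta^2\Phi_t}=\sum f_t(|\mathbf{s}|)^2\widehat T(\mathbf{s})^2\le W_t$, because $0\le f_t\le1$.

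For the operator inequality \eqref{eq:test 4}, I would start from Lemma~\ref{lem:gamma}: the left side equals $\Gamma(\Phi_t)$. Since $\Phi_t=\int_t^{2t}P_s T\,ds$ (the scalar part of $T$ is killed by every $\widetilde{\operatorname{ad}}_{ja}$), the map $X\mapsto\Gamma(X)$ is a sum of squares of linear self-adjoint maps. It is therefore operator convex in the sense $\Gamma\bigl(\frac1t\int_t^{2t} X_s\,ds\bigr)\le\frac1t\int_t^{2t}\Gamma(X_s)\,ds$; this is Jensen for $Z\mapsto Z^2$, or equivalently positivity of $\Gamma$ applied to $X_s-\bar X$. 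Applying this with the prefactor $t^2$ gives
\[
\Gamma(\Phi_t)\le t\int_t^{2t}\Gamma(P_sT)\,ds .
\]
By \eqref{eq:reverse-poincare}, $\Gamma(P_sT)\le\frac{1}{2(e^s-1)}$, so the right side is at most $t\int_t^{2t}\frac{ds}{2(e^s-1)}$. Comparing this with $t/2$ is the delicate constant bookkeeping. The crude bound $\frac1{2s}$ gives $\frac t2\log 2\le\frac t2$, which already suffices.

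For \eqref{eq:comm-lower}, I would use Lemma~\ref{lem:adS}(iii) and self-adjointness. Write $G_{ja}=\widetilde{\operatorname{ad}}_{ja}(T)$ and $H_{ja}=\widetilde{\operatorname{ad}}_{ja}(\Phi_t)$. The key identity I aim for is
\[
\ip{G_{ja}}{H_{ja}}=\ip{G_{ja}}{[T/2,[T/2,H_{ja}]]}\cdot(\text{const}),
\]
or more directly: because $TG_{ja}=-G_{ja}T$, we have $G_{ja}=\frac12 T[T,G_{ja}]$. Moving the commutator across (ad is skew for self-adjoint $T$) gives $\ip{G}{H}=-\frac12\ip{[T,TG]}{H}$, and with $[T,TG]=T[T,G]=2G$ this can be rewritten as $\ip{G}{H}=\ip{TG}{[T/2,H]}$ up to sign. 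Cauchy--Schwarz over $(j,a)$ then gives
\[
W_t\le\Bigl(\sum\|TG_{ja}\|_2^2\Bigr)^{1/2}\Bigl(\sum\|[T/2,H_{ja}]\|_2^2\Bigr)^{1/2}.
\]
Since $T$ is unitary, $\|TG_{ja}\|_2=\|G_{ja}\|_2$, and $\sum\|G_{ja}\|_2^2=\mathrm{Inf}(T)$ by Lemma~\ref{lem:adja}(v). This is the main step, and the sign and the factor $1/2$ in the identity $\ip{G}{H}=\pm\ip{TG}{[T/2,H]}$ need to be checked carefully. Expanding, $\ip{TG}{[T,H]}/2=\frac12\tr(GT(TH-HT))=\frac12\tr(GH)-\frac12\tr(GTHT)$, and anticommutation turns the second term into $+\frac12\tr(GH)$, so the total is exactly $\ip{G}{H}$ when $G$ and $H$ are self-adjoint.
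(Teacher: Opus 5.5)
Your proposal is correct and follows the paper's proof essentially step by step: the same Pauli-spectral computations via $\sum_{j,a}\widetilde{\operatorname{ad}}_{ja}^*\widetilde{\operatorname{ad}}_{ja}=\Delta$ for \eqref{eq:test 1}--\eqref{eq:test 3}, and the same bound $\Gamma(\Phi_t)\le t\int_t^{2t}\Gamma(P_sT)\,ds$ combined with the crude $1/(2s)$ estimate for \eqref{eq:test 4}. For \eqref{eq:comm-lower}, your identity $\ip{G}{H}=\ip{TG}{[T/2,H]}$ is exactly the paper's $\ip{G}{Q_T H}$ with the projection $Q_T(Z)=(Z-TZT)/2$, since $Q_T(G)=G$ and $Q_T(Z)=T[T/2,Z]$; the only slip is in your exploratory remark that $\operatorname{ad}_T$ is skew, when by Lemma~\ref{lem:adS}(i) it is self-adjoint for $T=T^*$ (so the intermediate factor $-\frac12$ should be $+\frac12$), but your final explicit trace expansion is correct and is all the argument needs.
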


\begin{proof}
Note that 
Lemma~\ref{lem:adja}(iv) gives
\begin{equation}\label{eq:L}
 \sum_{j=1}^n\sum_{a=1}^3
\widetilde{\operatorname{ad}}_{ja}^*
\widetilde{\operatorname{ad}}_{ja}
=\sum_{j=1}^nD_j=\Delta.
\end{equation}
Hence, using $\Delta\Phi_t=(P_t-P_{2t})T$, we obtain
\begin{align*}
  \sum_{j=1}^n\sum_{a=1}^3
  \ip{\widetilde{\operatorname{ad}}_{ja}(T)}
     {\widetilde{\operatorname{ad}}_{ja}(\Phi_t)}
  =\ip{T}{\Delta\Phi_t}
  &=\sum_{\mathbf{s}\ne\mathbf{0}}
    \bigl(e^{-t|\mathbf{s}|}-e^{-2t|\mathbf{s}|}\bigr)
    \widehat{T}(\mathbf{s})^2
   =W_t.
\end{align*}
This is \eqref{eq:test 1}.

By \eqref{eq:L}, we also have 
$$  \sum_{j=1}^n\sum_{a=1}^3
\norm{\widetilde{\operatorname{ad}}_{ja}(\Phi_t)}_2^2
=\ip{\Phi_t}{\Delta\Phi_t}
=\sum_{\mathbf{s}\ne\mathbf{0}}
\frac{f_t(|\mathbf{s}|)^2}{|\mathbf{s}|}
\widehat{T}(\mathbf{s})^2.$$
Note that $m\ge1$,
\[
  0<f_t(m)=e^{-tm}(1-e^{-tm})
  \le1-e^{-tm}\le tm,
  \qquad
  \frac{f_t(m)^2}{m}\le t f_t(m).
\]
 Then the assertion \eqref{eq:test 2} follows. 
 
  Combining \eqref{eq:total-inf}, Lemma~\ref{lem:adja}(iii) and \eqref{eq:L}, we have
\begin{align*}
  \sum_{j=1}^n\sum_{a=1}^3
    \mathrm{Inf}\bigl(\widetilde{\operatorname{ad}}_{ja}(\Phi_t)\bigr)
  &=\sum_{j=1}^n\sum_{a=1}^3
    \ip{\widetilde{\operatorname{ad}}_{ja}(\Phi_t)}
       {\Delta\widetilde{\operatorname{ad}}_{ja}(\Phi_t)}\\
  &=\ip{\Phi_t}{\Delta^2\Phi_t}
  =\sum_{\mathbf{s}\ne\mathbf{0}}
    f_t(|\mathbf{s}|)^2\widehat{T}(\mathbf{s})^2
  \le\sum_{\mathbf{s}\ne\mathbf{0}}
    f_t(|\mathbf{s}|)\widehat{T}(\mathbf{s})^2
  =W_t,
\end{align*}
which proves \eqref{eq:test 3}.

Now we turn to prove \eqref{eq:test 4}. For each $j\in [n]$, $a\in [3]$ and $t>0$, we have 
\[\widetilde{\operatorname{ad}}_{ja}(\Phi_t)=\int_t^{2t} \widetilde{\operatorname{ad}}_{ja}(P_sT) ds.\]
Then, by  \eqref{eq:gamma-ad-square}, we have 
\[\sum_{j=1}^n\sum_{a=1}^3
\bigl(\widetilde{\operatorname{ad}}_{ja}(\Phi_t)\bigr)^2\leq t\int_t^{2t} \sum_{j=1}^n\sum_{a=1}^3
\bigl(\widetilde{\operatorname{ad}}_{ja}(P_sT)\bigr)^2 ds=t\int_t^{2t}\Gamma(P_sT)ds.\]
Thus,
\begin{align*}
  \left\|
      \sum_{j=1}^n\sum_{a=1}^3
      \bigl(\widetilde{\operatorname{ad}}_{ja}(\Phi_t)\bigr)^2
  \right\|_\infty
  &\le t\int_t^{2t}\norm{\Gamma(P_sT)}_\infty\,ds\\
  &\le t \int_t^{2t}(2s)^{-1}\,ds
  \le \frac{t}{2},
\end{align*}
where we used \eqref{eq:reverse-poincare}. 
Consequently,
\[
  0\le
  \sum_{j=1}^n\sum_{a=1}^3
    \bigl(\widetilde{\operatorname{ad}}_{ja}(\Phi_t)\bigr)^2
  \le\frac t2\one.
\]

Finally,  let us define $Q_T:L_2(\mathcal{M}_{2^n})\to L_2(\mathcal{M}_{2^n})$ by setting
\[
  Q_T(Z)=\frac{Z-TZT}{2}.
\]
Since $T=T^*$ and $T^2=\one$, it follows that 
\[
  Q_T^*=Q_T,\qquad Q_T^2=Q_T.
\]
Moreover, Lemma \ref{lem:adS} gives
\begin{align*}
  Q_T\bigl(\widetilde{\operatorname{ad}}_{ja}(T)\bigr)
  =\frac{
    \widetilde{\operatorname{ad}}_{ja}(T)
    -T\widetilde{\operatorname{ad}}_{ja}(T)T}{2}
   =\widetilde{\operatorname{ad}}_{ja}(T),
\end{align*}
and 
\begin{align*}
TQ_T(Z)
=\frac{TZ-ZT}{2}=[T/2,Z]. 
\end{align*} 
  Since $\|\cdot\|_{L_2}$ is unitary invariant, we further have 
  $$\norm{[T/2,Z]}_2=\norm{Q_T(Z)}_2,\quad Z\in L_2(\mathcal{M}_{2^n}).$$
Using the Cauchy--Schwarz inequality, we have
\begin{align*}
  W_t
  &=\sum_{j=1}^n\sum_{a=1}^3
    \ip{Q_T(\widetilde{\operatorname{ad}}_{ja}(T))}
       {\widetilde{\operatorname{ad}}_{ja}(\Phi_t)}\\
  &=\sum_{j=1}^n\sum_{a=1}^3
    \ip{\widetilde{\operatorname{ad}}_{ja}(T)}
       {Q_T(\widetilde{\operatorname{ad}}_{ja}(\Phi_t))}\\
  &\le
  \left(\sum_{j=1}^n\sum_{a=1}^3
    \norm{\widetilde{\operatorname{ad}}_{ja}(T)}_2^2\right)^{1/2}
  \left(\sum_{j=1}^n\sum_{a=1}^3
    \norm{Q_T(\widetilde{\operatorname{ad}}_{ja}(\Phi_t))}_2^2
  \right)^{1/2}\\
  &=\sqrt{\mathrm{Inf}(T)}
  \left(
    \sum_{j=1}^n\sum_{a=1}^3
    \norm{\comm{T/2}
      {\widetilde{\operatorname{ad}}_{ja}(\Phi_t)}}_2^2
  \right)^{1/2},
\end{align*}
where we used \eqref{eq:test 1} in the first equality. 
The desired assertion \eqref{eq:comm-lower} follows. 
\end{proof}

In what follows, we always write
\[
\gamma(T):=\frac{\mathrm{Inf}(T)}{\mathrm{Var}(T)}\ge1.
\]

\begin{lemma}\label{lem:window}
	 There exist an integer $\nu\ge0$ and the
corresponding dyadic number
\[
  \kappa:=2^\nu,
  \qquad 1\le\kappa<8\gamma(T),
\]
such that, upon setting
\[
  t:=\frac1\kappa,
  \qquad
  \eta:=\frac{W_t}{\mathrm{Var}(T)}
  =\frac{W_{1/\kappa}}{\mathrm{Var}(T)},
\]
one has
\begin{equation}\label{eq:window-mass}
  \eta\ge\frac1{32}\sqrt{\frac\kappa{\gamma(T)}}.
\end{equation}
\end{lemma}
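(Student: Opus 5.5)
The plan is a dyadic pigeonhole argument on the Pauli-degree spectrum of $T$. For each $m\ge1$ put
\[
\mu(m):=\sum_{|\mathbf{s}|=m}\widehat{T}(\mathbf{s})^2,\qquad \sum_{m\ge1}\mu(m)=\mathrm{Var}(T),\qquad \sum_{m\ge1}m\,\mu(m)=\mathrm{Inf}(T),
\]
where the last identity is \eqref{eq:total-inf}. With this notation $W_{1/\kappa}=\sum_{m\ge1}f_{1/\kappa}(m)\mu(m)$, and every summand is nonnegative. For $\nu\ge0$ let $B_\nu$ be the block of degrees $m\ge1$ with $2^{\nu-1}<m\le 2^\nu$, so that $B_0=\{1\}$, and let $M_\nu:=\sum_{m\in B_\nu}\mu(m)$ be its mass.

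\textbf{Step 1 (local lower bound for $f_t$).} For $t=1/\kappa$ with $\kappa=2^\nu$ and $m\in B_\nu$ we have $tm\in(1/2,1]$. The function $x\mapsto e^{-x}-e^{-2x}$ is unimodal with maximum at $x=\log 2$, so on $[1/2,1]$ it is bounded below by its endpoint values, both of which exceed $0.23$. Hence
\[
W_{1/\kappa}\ge 0.23\,M_\nu .
\]
If the constants turn out too tight, I would enlarge the window to a few neighbouring blocks, say $m\in(\kappa/4,2\kappa]$, on which $f_{1/\kappa}$ is still bounded below by an absolute constant.

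\textbf{Step 2 (most mass sits in admissible blocks).} By Markov's inequality,
\[
\sum_{m>4\gamma(T)}\mu(m)\le \frac{\mathrm{Inf}(T)}{4\gamma(T)}=\frac{\mathrm{Var}(T)}4 .
\]
Every $m\le 4\gamma(T)$ lies in some $B_\nu$ with $2^{\nu-1}<4\gamma(T)$, that is, with $\kappa=2^\nu<8\gamma(T)$. Call such $\nu$ admissible; this is exactly the range of $\kappa$ allowed in the statement. The admissible blocks therefore carry total mass at least $\tfrac34\mathrm{Var}(T)$.

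\textbf{Step 3 (pigeonhole against a geometric series).} Suppose, for contradiction, that $M_\nu<c\,\mathrm{Var}(T)\sqrt{2^\nu/\gamma(T)}$ for every admissible $\nu$. Summing over the admissible range gives a geometric sum dominated by its largest term:
\[
\sum_{2^\nu<8\gamma}\sqrt{2^\nu}\le \frac{\sqrt{8\gamma}}{1-2^{-1/2}} .
\]
The admissible mass is then less than $c\,\tfrac{\sqrt8}{1-2^{-1/2}}\,\mathrm{Var}(T)$, which contradicts Step 2 once $c$ is a small enough absolute constant. So some admissible $\kappa$ satisfies $M_\nu\ge c\,\mathrm{Var}(T)\sqrt{\kappa/\gamma(T)}$, and Step 1 gives $\eta\ge 0.23\,c\,\sqrt{\kappa/\gamma(T)}$.

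\textbf{Main obstacle.} The structure of the argument is robust; what is delicate is reaching the specific constant $1/32$. The crude version above only yields a constant around $1/55$. To improve it, I would strengthen Step 1 by lower-bounding $W_{1/\kappa}$ through a weighted sum of several adjacent blocks rather than $B_\nu$ alone. I would also sharpen Step 2 by combining the Markov bound with the weighted identity $\sum_\nu 2^{\nu-1}M_\nu\le\mathrm{Inf}(T)$, so that the contradiction hypothesis is tested against both the mass and the first moment simultaneously. Only the existence of some absolute constant matters downstream for Theorem~\ref{thm:main}, so I would settle these numerics last.
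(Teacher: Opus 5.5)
Your Steps 1--3 are correct, but they prove the lemma only with a constant of about $1/56$ in place of $1/32$. The pigeonhole forces $c\sqrt8/(1-2^{-1/2})<\tfrac34$, i.e.\ $c<0.078$, and Step~1 then multiplies by $0.23$. Deferring this is not harmless, for two reasons. First, $1/32$ is exactly what the statement asserts. Second, the constant is used quantitatively afterwards: in Lemma~\ref{lem:low 2}, the inequality $K+1\ge 32\gamma(T)^2/(\kappa\eta^2)$ is derived from $\eta^2\ge\kappa/(2^{10}\gamma(T))$ together with the factor $2^{15}$ in \eqref{eq:dK}. That step fails with $1/56$ unless $K$ is redefined. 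The loss is built into your method. Lower-bounding $W_{1/\kappa}$ by the mass of a single block discards the factor $\min_{x\in[1/2,1]}(e^{-x}-e^{-2x})\approx0.23$. Your suggested repair, widening the window to $(\kappa/4,2\kappa]$, does not obviously help: the pointwise bound drops to about $0.117$, while a block at either end of the admissible range is counted only twice in the pigeonhole sum, so the bookkeeping again gives roughly $1/55$.

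The missing idea is that the test quantities telescope across dyadic scales. Since $f_{2^{-\nu}}(m)=e^{-m2^{-\nu}}-e^{-m2^{-(\nu-1)}}$, one has exactly $\sum_{\nu=0}^{J}f_{2^{-\nu}}(m)=e^{-m/2^{J}}-e^{-2m}$. So you should pigeonhole on the quantities $\eta_\nu=W_{2^{-\nu}}/\mathrm{Var}(T)$ themselves, not on block masses. Take $J=\lceil\log_2(4\gamma(T))\rceil$, so that $4\gamma(T)\le 2^J<8\gamma(T)$. Then $e^{-x}\ge 1-x$, $e^{-2m}\le e^{-2}$, and your identities $\sum_m\mu(m)=\mathrm{Var}(T)$ and $\sum_m m\,\mu(m)=\mathrm{Inf}(T)$ give $\sum_{\nu=0}^{J}\eta_\nu\ge 1-\gamma(T)/2^J-e^{-2}\ge\tfrac34-e^{-2}>\tfrac12$. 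Your geometric-series estimate gives $\sum_{\nu=0}^{J}\sqrt{2^\nu/\gamma(T)}<\sqrt8/(1-2^{-1/2})<10$. So if $\eta_\nu<\tfrac1{32}\sqrt{2^\nu/\gamma(T)}$ held for all $\nu\le J$, the total would be below $\tfrac{10}{32}<\tfrac12$, a contradiction. Every such $\nu$ automatically satisfies $1\le 2^\nu<8\gamma(T)$. This single computation replaces your Steps~1 and~2, with the Markov step becoming $e^{-m/2^J}\ge 1-m/2^J$, and it loses nothing pointwise. It is the paper's proof, and it leaves enough room to obtain the constant $1/20$.
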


\begin{proof}
Define
\[
  J:=\left\lceil\log_2\bigl(4\gamma(T)\bigr)\right\rceil,
  \qquad
  \log_2 y:=\frac{\log y}{\log2},
\]
and, for every $\nu\in\{0,\ldots,J\}$, define
\[
  \kappa_\nu:=2^\nu,
  \qquad
  t_\nu:=\frac1{\kappa_\nu},
  \qquad
  \eta_\nu:=\frac{W_{t_\nu}}{\mathrm{Var}(T)}.
\]
Also put
\[
  w_{\mathbf{s}}
  :=\frac{\widehat T(\mathbf{s})^2}{\mathrm{Var}(T)}
  \qquad(\mathbf{s}\ne\mathbf0).
\]
Then
\begin{equation*}
  \sum_{\mathbf{s}\ne\mathbf0}w_{\mathbf{s}}=1,
  \qquad
  \sum_{\mathbf{s}\ne\mathbf0}|\mathbf{s}|w_{\mathbf{s}}=\gamma(T),
  \qquad
  \eta_\nu
  =\sum_{\mathbf{s}\ne\mathbf0}
    w_{\mathbf{s}}f_{t_\nu}(|\mathbf{s}|).
\end{equation*}

For $m\ge1$ and $0\le\nu\le J$,
\[
  f_{t_\nu}(m)
  =e^{-m/2^\nu}-e^{-m/2^{\nu-1}}.
\]
For $\nu=0$, the second term on the right is $e^{-2m}$.  Therefore,
\begin{align*}
  \sum_{\nu=0}^J\eta_\nu
  &=\sum_{\mathbf{s}\ne\mathbf0}w_{\mathbf{s}}
    \left(e^{-|\mathbf{s}|/\kappa_J}
          -e^{-2|\mathbf{s}|}\right)\\
  &\ge\sum_{\mathbf{s}\ne\mathbf0}w_{\mathbf{s}}
    \left(1-\frac{|\mathbf{s}|}{\kappa_J}-e^{-2}\right)=1-\frac{\gamma(T)}{\kappa_J}-e^{-2},
\end{align*}
where  we used $e^{-s}\ge1-s$ for $s\ge0$.
By the definition of $J$,
\begin{equation}\label{eq:window-largest-scale}
  4\gamma(T)\le\kappa_J<8\gamma(T).
\end{equation}
Consequently, 
\begin{equation}\label{eq:window-total-half}
  \sum_{\nu=0}^J\eta_\nu
  \ge\frac34-e^{-2}>\frac12.
\end{equation}

On the other hand, \eqref{eq:window-largest-scale}  gives
\begin{align*}
  \sum_{\nu=0}^J\sqrt{\frac{\kappa_\nu}{\gamma(T)}}
  &=\sqrt{\frac{\kappa_J}{\gamma(T)}}
    \sum_{r=0}^J2^{-r/2}<\frac{\sqrt8}{1-2^{-1/2}}<10.
\end{align*}
If
\[
  \eta_\nu<\frac1{32}\sqrt{\frac{\kappa_\nu}{\gamma(T)}}
  \qquad \forall 0\le\nu\le J,
\]
then 
\[
  \sum_{\nu=0}^J\eta_\nu
  <\frac1{32}\sum_{\nu=0}^J
    \sqrt{\frac{\kappa_\nu}{\gamma(T)}}
  <\frac{10}{32}<\frac12,
\]
contradicting \eqref{eq:window-total-half}.  Hence, there exists
$\nu\in\{0,\ldots,J\}$ such that
\[
  \eta_\nu\ge\frac1{32}\sqrt{\frac{\kappa_\nu}{\gamma(T)}}.
\]
Taking $\kappa=\kappa_\nu$, $t=t_\nu$, and $\eta=\eta_\nu$, and using
$1\le\kappa_\nu\le\kappa_J<8\gamma(T)$, proves
\eqref{eq:window-mass}.
\end{proof}

In what follows, according to   Lemma~\ref{lem:window}, we fix  
\begin{equation}\label{eq:window-parameters}
  t=1/\kappa,
  \qquad W_t=\mathrm{Var}(T)\eta,
  \qquad q=\gamma(T)/\kappa,
  \qquad R=\max\{1,q\}.
\end{equation}
We further set
\begin{equation}\label{eq:dK}
  d=\left\lceil\kappa(8+2\log R)\right\rceil,
  \qquad
  K=\max\left\{d,
  \left\lceil\frac{2^{15}\gamma(T)^3}{\kappa^2}\right\rceil\right\}.
\end{equation}

\begin{lemma}\label{lem:low 1}
	Let $u$ be as in \eqref{eq: def u}, and let $Y_{\varepsilon,t}$ be as in \eqref{eq: Yd}. Then 
	$$	\E_\varepsilon\norm{[u(T),Y_{\varepsilon,t}]}_2^2\geq \frac{W_t^2}{\mathrm{Inf}(T)}.$$
\end{lemma}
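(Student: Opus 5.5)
Since $T$ is a quantum Boolean function, $T^2=\one$ and its spectrum lies in $\{-1,1\}$. The values $u(1)=\tfrac12$ and $u(-1)=-\tfrac12$ then give $u(T)=T/2$. So we only need to bound $\E_\varepsilon\norm{[T/2,Y_{\varepsilon,t}]}_2^2$ from below, and the plan is to reduce this to the estimate \eqref{eq:comm-lower} of Lemma~\ref{lem:tests} by averaging over the signs.

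First, we expand using linearity of the commutator:
\[
[T/2,Y_{\varepsilon,t}]=\sum_{j=1}^n\sum_{a=1}^3\varepsilon_{ja}\,Z_{ja},
\qquad Z_{ja}:=\comm{T/2}{\widetilde{\operatorname{ad}}_{ja}(\Phi_t)}.
\]
Squaring the $L_2$ norm gives
\[
\norm{[T/2,Y_{\varepsilon,t}]}_2^2=\sum_{(j,a),(k,b)}\varepsilon_{ja}\varepsilon_{kb}\ip{Z_{ja}}{Z_{kb}}.
\]
The $\varepsilon_{ja}$ are independent symmetric signs, so $\E_\varepsilon[\varepsilon_{ja}\varepsilon_{kb}]=\delta_{(j,a),(k,b)}$. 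This is the Rademacher orthogonality (Khintchine with equality in $L_2$), and it yields
\[
\E_\varepsilon\norm{[u(T),Y_{\varepsilon,t}]}_2^2=\sum_{j=1}^n\sum_{a=1}^3\norm{\comm{T/2}{\widetilde{\operatorname{ad}}_{ja}(\Phi_t)}}_2^2 .
\]

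Second, we square both sides of \eqref{eq:comm-lower}. Both sides are nonnegative, so this is legitimate, and the right-hand side above is at least $W_t^2/\mathrm{Inf}(T)$. This is exactly the claim.

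There is no real obstacle. The only points to check are that $\mathrm{Inf}(T)>0$, which holds because $T$ is nontrivial: if $T=-\one$, then $\mathrm{Var}(T)=0$ and $W_t=0$, so the claim is trivial in that case anyway. The other check is that the inner products $\ip{Z_{ja}}{Z_{kb}}$ may be complex, but only their real parts survive after summing the symmetric pairs. In any case the cross terms vanish in expectation regardless.
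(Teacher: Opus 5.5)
Your proposal is correct and follows the paper's proof essentially verbatim: you use $u(T)=T/2$, kill the cross terms via $\E_\varepsilon(\varepsilon_{ja}\varepsilon_{kb})=\delta_{jk}\delta_{ab}$, and then square \eqref{eq:comm-lower}. One small quibble: for $T=-\one$ the right-hand side is $0/0$, so the claim is not ``trivial'' there but rather undefined; the paper implicitly excludes this case as well, since Section~4 uses $\gamma(T)\ge 1$ throughout.
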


\begin{proof}
Since $T=T^*$ and $T^2=\one$, it follows that 	\(u(T)=T/2\). 
	Note that
	\(\E_\varepsilon(\varepsilon_{ja}\varepsilon_{kb})
	=\delta_{jk}\delta_{ab}\). 
	Then 
	\begin{equation*}
		\begin{aligned}
			\E_\varepsilon\norm{[u(T),Y_{\varepsilon,t}]}_2^2
			&=\sum_{j,k=1}^n\sum_{a,b=1}^3
			\E_\varepsilon(\varepsilon_{ja}\varepsilon_{kb})
			\ip{[T/2,\widetilde{\operatorname{ad}}_{ja}(\Phi_t)]}
			{[T/2,\widetilde{\operatorname{ad}}_{kb}(\Phi_t)]}\\
			&=\sum_{j=1}^n\sum_{a=1}^3
			\norm{[T/2,\widetilde{\operatorname{ad}}_{ja}(\Phi_t)]}_2^2
			\ge\frac{W_t^2}{\mathrm{Inf}(T)},
		\end{aligned}
	\end{equation*}
	where we used \eqref{eq:comm-lower} in the last inequality. 
\end{proof}

\begin{lemma}\label{lem:low 2}
		Let $u$ be as in \eqref{eq: def u}, and let $Y_{\varepsilon,t}$ be as in \eqref{eq: Yd}. Then 
		\[\E_\varepsilon\norm{[u(T)-u(T_K),Y_{\varepsilon,t}]}_2^2\leq \frac {W_t^2}{16\mathrm{Inf}(T)},\]
		where $T_K=\Pi_{\le K}T$.  
\end{lemma}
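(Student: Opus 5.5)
The plan is to drop the random signs by orthogonality, bound each commutator by the operator bound \eqref{eq:test 4}, and then pay for $u(T)-u(T_K)$ with the high-degree tail \eqref{eq:parent-tail}. The choice of $K$ in \eqref{eq:dK} should then be exactly what makes the constants close.

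\emph{Removing the signs.} Put $A:=u(T)-u(T_K)$, which is self-adjoint, and $Z_{ja}:=\widetilde{\operatorname{ad}}_{ja}(\Phi_t)$, which is self-adjoint by Lemma~\ref{lem:adja}(i) since $\Phi_t$ is self-adjoint. As in the proof of Lemma~\ref{lem:low 1}, the relation $\E_\varepsilon(\varepsilon_{ja}\varepsilon_{kb})=\delta_{jk}\delta_{ab}$ gives
\[
\E_\varepsilon\norm{[A,Y_{\varepsilon,t}]}_2^2=\sum_{j=1}^n\sum_{a=1}^3\norm{[A,Z_{ja}]}_2^2 .
\]

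\emph{Bounding each commutator by traces.} For each pair $(j,a)$ I use $\norm{[A,Z]}_2^2\le 2(\norm{AZ}_2^2+\norm{ZA}_2^2)$. For self-adjoint $A,Z$, cyclicity of the trace gives
\[
\norm{AZ}_2^2=\tr(ZA^2Z)=\tr(A^2Z^2),\qquad \norm{ZA}_2^2=\tr(AZ^2A)=\tr(A^2Z^2).
\]
Summing over $(j,a)$ yields
\[
\sum_{j,a}\norm{[A,Z_{ja}]}_2^2\le 4\,\tr\Bigl(A^2\sum_{j,a}Z_{ja}^2\Bigr).
\]
Since $A^2\ge0$, the operator inequality \eqref{eq:test 4} gives $\tr(A^2\sum_{j,a} Z_{ja}^2)\le \tfrac t2\norm{A}_2^2$. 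Hence the expectation is at most $2t\norm{A}_2^2$.

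\emph{Tail estimate.} By Lemma~\ref{lem:u}, applied to the self-adjoint pair $T$ and $T_K$, we have $\norm{A}_2\le\norm{T-T_K}_2$. By \eqref{eq:parent-tail}, $\norm{T-T_K}_2^2\le \mathrm{Inf}(T)/(K+1)$. Combining with the previous step,
\[
\E_\varepsilon\norm{[A,Y_{\varepsilon,t}]}_2^2\le \frac{2t\,\mathrm{Inf}(T)}{K+1}=\frac{2\,\mathrm{Inf}(T)}{\kappa(K+1)} .
\]

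\emph{Matching the lower side.} Lemma~\ref{lem:window} gives $W_t=\mathrm{Var}(T)\eta\ge \frac{\mathrm{Var}(T)}{32}\sqrt{\kappa/\gamma(T)}$, so
\[
\frac{W_t^2}{16\,\mathrm{Inf}(T)}\ge \frac{\mathrm{Var}(T)^2\kappa}{2^{14}\gamma(T)\,\mathrm{Inf}(T)} .
\]
It therefore suffices to have
\[
\frac{2\,\mathrm{Inf}(T)}{\kappa(K+1)}\le \frac{\mathrm{Var}(T)^2\kappa}{2^{14}\gamma(T)\,\mathrm{Inf}(T)} .
\]
Using $\mathrm{Inf}(T)^2/\mathrm{Var}(T)^2=\gamma(T)^2$, this is equivalent to $K+1\ge 2^{15}\gamma(T)^3/\kappa^2$. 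That holds by the definition of $K$ in \eqref{eq:dK}.

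The only step that needs care is the reduction to $\tr(A^2\sum_{j,a} Z_{ja}^2)$. It is what allows the operator-norm bound \eqref{eq:test 4}, which comes from the reverse Poincaré estimate, to be paired with the purely $L_2$ tail bound on $A$. The rest is bookkeeping of constants.
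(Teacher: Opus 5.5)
Your proof is correct and matches the paper's argument step for step: the trace bound $\norm{[A,Z]}_2^2\le 4\tau(A^2Z^2)$ is paired with the operator inequality \eqref{eq:test 4}, then Lemma~\ref{lem:u} and \eqref{eq:parent-tail} give $2\mathrm{Inf}(T)/(\kappa(K+1))$, and the choice of $K$ in \eqref{eq:dK} together with \eqref{eq:window-mass} closes the constants. The only cosmetic difference is that you remove the random signs before applying the trace bound to each $Z_{ja}$, whereas the paper applies it to $Y_{\varepsilon,t}$ directly and then uses $\E_\varepsilon Y_{\varepsilon,t}^2=\sum_{j,a}Z_{ja}^2$; your version also spells out the justification of that trace bound, which the paper leaves implicit.
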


\begin{proof}
	Note that 
	$$\E_\varepsilon Y_{\varepsilon,t}^2
	=\sum_{j,k=1}^n\sum_{a,b=1}^3
	\E_\varepsilon(\varepsilon_{ja}\varepsilon_{kb})
	\widetilde{\operatorname{ad}}_{ja}(\Phi_t)
	\widetilde{\operatorname{ad}}_{kb}(\Phi_t)\\
	=\sum_{j=1}^n\sum_{a=1}^3
	\bigl(\widetilde{\operatorname{ad}}_{ja}(\Phi_t)\bigr)^2\leq \frac{t}{2}\one,$$
	where the last inequality is due to \eqref{eq:test 4}. 
	Set $A=u(T)-u(T_K)$.  By
	Lemma~\ref{lem:u} and \eqref{eq:parent-tail},
	\begin{align*}
	\E_\varepsilon\norm{[A,Y_{\varepsilon,t}]}_2^2
		&\le4\tr\!\left(A^2\E_\varepsilon Y_{\varepsilon,t}^2\right)\\
		&\le2t\tr(A^2)
		=2t\norm{u(T)-u(T_K)}_2^2\\
		&\le2t\norm{T-T_K}_2^2
		\le\frac{2\mathrm{Inf}(T)}{\kappa(K+1)}.
	\end{align*}
	Put $V:=\mathrm{Var}(T)$.  By \eqref{eq:window-mass}, \eqref{eq:dK},
 and \eqref{eq:window-parameters},
	\[
	\eta^2\ge\frac{\kappa}{2^{10}\gamma(T)},
	\qquad
	K+1>\frac{2^{15}\gamma(T)^3}{\kappa^2}
	\ge\frac{32\gamma(T)^2}{\kappa\eta^2},
	\qquad
\gamma(T)=\frac{	\mathrm{Inf}(T)}{V},\quad W_t=V\eta.
	\]
	Therefore
	\[
	\E_\varepsilon\norm{[A,Y_{\varepsilon,t}]}_2^2
	<\frac{2\mathrm{Inf}(T)}{\kappa}
	  \frac{\kappa\eta^2}{32\gamma(T)^2}
	=\frac{\mathrm{Inf}(T)\eta^2}{16\gamma(T)^2}
	=\frac{W_t^2}{16\mathrm{Inf}(T)}.
	\]
\end{proof}

\begin{lemma}\label{lem:low 3}
			Let $u$ be as in \eqref{eq: def u}, and let $Y_{\varepsilon,t,d}$ be as in \eqref{eq: Yd}.  Then
			\[	\E_\varepsilon
			\norm{[u(T_K),Y_{\varepsilon,t}-Y_{\varepsilon,t,d}]}_2^2\leq \frac{W_t^2}{16\mathrm{Inf}(T)}.\]
\end{lemma}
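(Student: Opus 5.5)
The plan is to control the commutator crudely by an operator-norm bound on $u(T_K)$, and then to show that the high-degree part $Y_{\varepsilon,t}-Y_{\varepsilon,t,d}$ has tiny $L_2$-mass because of the heat-semigroup damping $e^{-t|\mathbf{s}|}$ at degrees $|\mathbf{s}|>d$.

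\emph{Step 1 (remove the commutator).} Since $T_K$ is self-adjoint and $|u|\le\frac12$, we have $\norm{u(T_K)}_\infty\le\frac12$. Hence for every $Z$,
\[
\norm{[u(T_K),Z]}_2\le2\norm{u(T_K)}_\infty\norm{Z}_2\le\norm{Z}_2 .
\]
It therefore suffices to bound $\E_\varepsilon\norm{Y_{\varepsilon,t}-Y_{\varepsilon,t,d}}_2^2$.

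\emph{Step 2 (identify the high-degree part).} By \eqref{eq:adja-on-Pauli}--\eqref{eq:adja-support}, each $\widetilde{\operatorname{ad}}_{ja}$ maps $\sigma_{\mathbf{s}}$ into the span of Paulis of the same degree $|\mathbf{s}|$. So $\widetilde{\operatorname{ad}}_{ja}$ commutes with $\Pi_{\le d}$, and
\[
Y_{\varepsilon,t}-Y_{\varepsilon,t,d}=\sum_{j,a}\varepsilon_{ja}\widetilde{\operatorname{ad}}_{ja}(\Pi_{>d}\Phi_t),
\qquad \Pi_{>d}:=\id-\Pi_{\le d}.
\]
Using $\E_\varepsilon(\varepsilon_{ja}\varepsilon_{kb})=\delta_{jk}\delta_{ab}$, \eqref{eq:L}, and the fact that $\Delta$ commutes with $\Pi_{>d}$, we get
\[
\E_\varepsilon\norm{Y_{\varepsilon,t}-Y_{\varepsilon,t,d}}_2^2
=\ip{\Pi_{>d}\Phi_t}{\Delta\Pi_{>d}\Phi_t}
=\sum_{|\mathbf{s}|>d}\frac{f_t(|\mathbf{s}|)^2}{|\mathbf{s}|}\widehat T(\mathbf{s})^2 .
\]
Since $0<f_t(m)\le e^{-tm}$ and $t=1/\kappa$, each term satisfies $f_t(m)^2/m\le e^{-2d/\kappa}/d$ for $m>d$. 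Together with $\sum_{\mathbf{s}\ne\mathbf 0}\widehat T(\mathbf{s})^2=\mathrm{Var}(T)=:V$, this gives
\[
\E_\varepsilon\norm{Y_{\varepsilon,t}-Y_{\varepsilon,t,d}}_2^2\le\frac{e^{-2d/\kappa}}{d}\,V .
\]

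\emph{Step 3 (compare with the target).} The target is $W_t^2/(16\,\mathrm{Inf}(T))=V\eta^2/(16\gamma(T))$. By \eqref{eq:window-mass}, $\eta^2\ge\kappa/(2^{10}\gamma(T))$, so it suffices to show
\[
\frac{e^{-2d/\kappa}}{d}\le\frac{\kappa}{2^{14}\gamma(T)^2}.
\]
The choice \eqref{eq:dK} gives $d\ge\kappa(8+2\log R)$. Hence $e^{-2d/\kappa}\le e^{-16}R^{-4}$ and $1/d\le1/(8\kappa)$. Multiplying through by $2^{14}\gamma(T)^2/\kappa$ and writing $q=\gamma(T)/\kappa$, the required inequality reduces to
\[
2^{11}e^{-16}\,q^2R^{-4}\le1 .
\]
This holds because $R\ge\max\{1,q\}$ gives $q^2R^{-4}\le1$, and $e^{16}>2^{11}$.

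The only point needing care is Step 2: the degree-preservation of $\widetilde{\operatorname{ad}}_{ja}$ must be used to move $\Pi_{\le d}$ inside. After that, the argument is purely bookkeeping against the parameter choices \eqref{eq:window-parameters}--\eqref{eq:dK}.
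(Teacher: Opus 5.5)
Your proof is correct and follows the paper's argument: the same bound $\norm{[u(T_K),Z]}_2\le 2\norm{u(T_K)}_\infty\norm{Z}_2\le\norm{Z}_2$, the same identity $\E_\varepsilon\norm{Y_{\varepsilon,t}-Y_{\varepsilon,t,d}}_2^2=\sum_{|\mathbf{s}|>d}f_t(|\mathbf{s}|)^2|\mathbf{s}|^{-1}\widehat T(\mathbf{s})^2$, and then bookkeeping against \eqref{eq:window-mass} and \eqref{eq:dK}. The only difference is in that final bookkeeping: the paper uses $1/m\le m/(d+1)^2$ to bound the tail by $\mathrm{Inf}(T)e^{-2t(d+1)}/(d+1)^2$, whereas you use $1/m\le 1/d$ together with $\sum_{\mathbf{s}\ne\mathbf 0}\widehat T(\mathbf{s})^2=\mathrm{Var}(T)$, and this closes with comparable slack.
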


\begin{proof}
By the triangle inequality,
	$$\norm{[u(T_K),Y_{\varepsilon,t}-Y_{\varepsilon,t,d}]}_2\leq 2\|u(T_K)\|_{\infty} \norm{Y_{\varepsilon,t}-Y_{\varepsilon,t,d}}_2\leq \norm{Y_{\varepsilon,t}-Y_{\varepsilon,t,d}}_2,$$
	where the last inequality is due to Lemma~\ref{lem:u}. Note that $\mathrm{Inf}(\sigma_{\mathbf{s}})=|\mathbf{s}|$ for each $\mathbf{s}\neq 0$. 
Thus, by Lemma~\ref{lem:adja}(v) and  orthogonality, we have
	\begin{align*}
		\E_\varepsilon
		\norm{[u(T_K),Y_{\varepsilon,t}-Y_{\varepsilon,t,d}]}_2^2
		&\le\E_\varepsilon
		\norm{Y_{\varepsilon,t}-Y_{\varepsilon,t,d}}_2^2\\
		&=\sum_{j=1}^n\sum_{a=1}^3
		\norm{(\mathrm{id}-\Pi_{\le d})
			\widetilde{\operatorname{ad}}_{ja}(\Phi_t)}_2^2\\
		&=\sum_{|\mathbf{s}|>d}
		\frac{f_t(|\mathbf{s}|)^2}{|\mathbf{s}|^2}
		\widehat T(\mathbf{s})^2
		\sum_{j=1}^n\sum_{a=1}^3
		\norm{\widetilde{\operatorname{ad}}_{ja}
			(\sigma_{\mathbf{s}})}_2^2\\
		&=\sum_{|\mathbf{s}|>d}
		\frac{f_t(|\mathbf{s}|)^2}{|\mathbf{s}|}
		\widehat T(\mathbf{s})^2\\
		&\le\frac{e^{-2t(d+1)}}{(d+1)^2}
		\sum_{|\mathbf{s}|>d}
		|\mathbf{s}|\widehat T(\mathbf{s})^2\le\frac{\mathrm{Inf}(T)e^{-2t(d+1)}}{(d+1)^2}.
	\end{align*}
	Here, we also used the fact: for $m\ge d+1$,
	\[
	f_t(m)^2\le e^{-2tm}\le e^{-2t(d+1)},
	\qquad
	\frac1m\le\frac{m}{(d+1)^2}.
	\]
	
	Keep in mind that we are using the symbols \eqref{eq:window-parameters}. 	Let $D=(d+1)/\kappa>8+2\log R$ (note that $t=1/\kappa$).  This further implies 
		\[
		\frac1D<\frac18,
		\qquad
		e^{-D}<e^{-8}R^{-2}.
		\]
	Using
	\[
	t(d+1)=D,
	\qquad
	\mathrm{Inf}(T)=\gamma(T)\mathrm{Var}(T),
	\qquad
	W_t=\eta \mathrm{Var}(T),
	\qquad
	d+1=\kappa D,
	\]
	we obtain
	\begin{align*}
		\left(
		\frac{4\mathrm{Inf}(T)e^{-t(d+1)}}{(d+1)W_t}
		\right)^2=\left(
		\frac{4\gamma(T)e^{-D}}{\kappa D\eta}
		\right)^2		=\left(
		\frac{4qe^{-D}}{D\eta}
		\right)^2\leq \left(
		\frac{e^{-8}R^{-2}q}{2\eta}
		\right)^2.
	\end{align*}
	By \eqref{eq:window-mass}, we have 
$$	\frac{e^{-8}R^{-2}q}{2\eta}\leq 16 e^{-8}R^{-2}q^{3/2}<1$$
since 
\[
q^{3/2}R^{-2}
=
\begin{cases}
	q^{-1/2}\le1, & q\ge1,\\[2mm]
	q^{3/2}<1, & 0<q<1,
\end{cases}
\]
as $R=\max\{1,q\}$. Consequently, we have
$$\E_\varepsilon
\norm{[u(T_K),Y_{\varepsilon,t}-Y_{\varepsilon,t,d}]}_2^2\leq \frac{\mathrm{Inf}(T)e^{-2t(d+1)}}{(d+1)^2}\leq \frac{W_t^2}{16\mathrm{Inf}(T)}.$$
This finishes the proof of the lemma.
\end{proof}

The following lower commutator estimate is the main result of this section. 
\begin{theorem}\label{thm:lower commutator}
Let $u$ be as in \eqref{eq: def u}. 
Let $t,d,K$ be as in \eqref{eq:window-parameters} and \eqref{eq:dK}, and let $T_K=\Pi_{\le K} T$, and $Y_{\varepsilon,t,d}$ be as in \eqref{eq: Yd}. Then
\begin{equation*}
  \E_\varepsilon
    \norm{\comm{u(T_K)}{Y_{\varepsilon,t,d}}}_2^2
  >\frac{W_t^2}{4\mathrm{Inf}(T)}.
\end{equation*}
\end{theorem}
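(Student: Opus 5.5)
The plan is to get the bound from Lemmas~\ref{lem:low 1}, \ref{lem:low 2} and \ref{lem:low 3} using the triangle inequality in the Hilbert space $L_2(\Omega_\varepsilon;L_2(\mathcal{M}_{2^n}))$. We write
\[
[u(T_K),Y_{\varepsilon,t,d}]
=[u(T),Y_{\varepsilon,t}]-[u(T)-u(T_K),Y_{\varepsilon,t}]-[u(T_K),Y_{\varepsilon,t}-Y_{\varepsilon,t,d}],
\]
which is a purely algebraic identity by bilinearity of the commutator. Let $\|Z\|_{*}:=(\E_\varepsilon\norm{Z}_2^2)^{1/2}$ denote the norm on random elements $Z=Z(\varepsilon)$. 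This is a genuine Hilbert norm, so the reverse triangle inequality applies:
\[
\|[u(T_K),Y_{\varepsilon,t,d}]\|_*\ge \|[u(T),Y_{\varepsilon,t}]\|_*-\|[u(T)-u(T_K),Y_{\varepsilon,t}]\|_*-\|[u(T_K),Y_{\varepsilon,t}-Y_{\varepsilon,t,d}]\|_*.
\]

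Next we insert the three estimates. Set $M:=W_t/\sqrt{\mathrm{Inf}(T)}$, which is positive since $T$ is nontrivial and hence $W_t>0$; this positivity also follows from \eqref{eq:window-mass}, because $\mathrm{Var}(T)>0$. Lemma~\ref{lem:low 1} gives that the first term is at least $M$. Lemmas~\ref{lem:low 2} and \ref{lem:low 3} give that each of the two error terms is at most $M/4$. Note that Lemma~\ref{lem:low 2} was in fact proved with a strict inequality. Hence
\[
\|[u(T_K),Y_{\varepsilon,t,d}]\|_*> M-\tfrac M4-\tfrac M4=\tfrac M2.
\]
Squaring this, which is legitimate because both sides are nonnegative, yields $\E_\varepsilon\norm{[u(T_K),Y_{\varepsilon,t,d}]}_2^2>W_t^2/(4\mathrm{Inf}(T))$, as claimed.

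The main thing to check is that all three lemmas are used with the same parameters $t,d,K$, namely those fixed in \eqref{eq:window-parameters} and \eqref{eq:dK}, so that the combination is consistent. We also need strictness in at least one of the error bounds. This holds because the proof of Lemma~\ref{lem:low 2} ends with a strict ``$<$'', coming from $K+1>2^{15}\gamma(T)^3/\kappa^2$. If one prefers not to rely on that, the non-strict version still gives $\ge W_t^2/(4\mathrm{Inf}(T))$. Strictness can then be recovered from the strict inequality $e^{-D}<e^{-8}R^{-2}$ in the proof of Lemma~\ref{lem:low 3}.
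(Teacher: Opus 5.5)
Your proposal is correct and follows essentially the same route as the paper: the same three-term decomposition of $[u(T_K),Y_{\varepsilon,t,d}]$, the reverse triangle inequality in the Hilbert space $L_2(\Omega_\varepsilon;L_2(\mathcal{M}_{2^n}))$, and the bounds from Lemmas~\ref{lem:low 1}, \ref{lem:low 2} and \ref{lem:low 3}. Your remark on strictness fills a point the paper glosses over, since its displayed chain only gives $\ge W_t/(2\sqrt{\mathrm{Inf}(T)})$. The strict ``$<$'' at the end of the proof of Lemma~\ref{lem:low 2}, or equally $e^{-D}<e^{-8}R^{-2}$ in Lemma~\ref{lem:low 3}, legitimately upgrades this to the claimed strict inequality.
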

\begin{proof}
According to the definition of $[\cdot, \cdot]$, we have
\begin{align*}
  [u(T_K),Y_{\varepsilon,t,d}]
  &=[u(T),Y_{\varepsilon,t}]
    -[u(T)-u(T_K),Y_{\varepsilon,t}]
    -[u(T_K),Y_{\varepsilon,t}-Y_{\varepsilon,t,d}]. 
\end{align*}
Consequently, by the   triangle inequality and Lemmas \ref{lem:low 1}-\ref{lem:low 3}, we have
\begin{align*}
	  \left(\E_\varepsilon
	\norm{[u(T_K),Y_{\varepsilon,t,d}]}_2^2\right)^{1/2}
	&\ge\left(	\E_\varepsilon\norm{[u(T),Y_{\varepsilon,t}]}_2^2\right)^{1/2}\\
	&\quad-\left(\E_\varepsilon\norm{[u(T)-u(T_K),Y_{\varepsilon,t}]}_2^2\right)^{1/2}\\
	&\quad-\left(	\E_\varepsilon
	\norm{[u(T_K),Y_{\varepsilon,t}-Y_{\varepsilon,t,d}]}_2^2\right)^{1/2}\\
&\quad \geq \frac {W_t}{\sqrt{\mathrm{Inf}(T)}}-\frac {W_t}{4\sqrt{\mathrm{Inf}(T)}}-\frac {W_t}{4\sqrt{\mathrm{Inf}(T)}}= \frac {W_t}{2\sqrt{\mathrm{Inf}(T)}}.
\end{align*}
The desired assertion follows. 
\end{proof}

\section{Proof of Theorem \ref{thm:main}}\label{sec 5}

In this section, we provide the proof of Theorem \ref{thm:main}. 

\begin{proof}[Proof of Theorem \ref{thm:main}]
	We first prove \eqref{eq:exp-form}. Let $u$ be as in \eqref{eq: def u}. 
	Let $t,d,K$ be as in \eqref{eq:window-parameters} and \eqref{eq:dK}, and let $T_K=\Pi_{\le K} T$, and $Y_{\varepsilon,t,d}$ be as in \eqref{eq: Yd}. 
	
	By Theorem \ref{thm:lower commutator},  we have 
	\begin{equation}\label{eq:e1}
	\frac{W_t^2}{4\mathrm{Inf}(T)}<\E_\varepsilon
		\norm{\comm{u(T_K)}{Y_{\varepsilon,t,d}}}_2^2.
	\end{equation}
On the other hand, applying Lemma~\ref{lem:u} and
Theorem~\ref{main upper bound} (with
\(X=T_K\) and \(Y=Y_{\varepsilon,t,d}\) there), we have
\begin{align}\label{eq:e2}
  \E_\varepsilon\norm{\comm{u(T_K)}{Y_{\varepsilon,t,d}}}_2^2
  &\le\E_\varepsilon\norm{\comm{T_K}{Y_{\varepsilon,t,d}}}_2^2\notag\\
  &\le32\left(\frac{16K}{d}\right)^d
    \sqrt{\beta(T_K)}\,
    \E_\varepsilon\left(
      \mathrm{Inf}(T_K)\norm{Y_{\varepsilon,t,d}}_2^2
      +\mathrm{Inf}(Y_{\varepsilon,t,d})\right).
\end{align}
Observe that 
$$\beta(T_K)\leq \beta(T),\qquad \mathrm{Inf}(T_K)\leq \mathrm{Inf}(T). $$
By orthogonality, the $L_2$-contractivity of
$\Pi_{\le d}$,  \eqref{eq:test 2} and \eqref{eq:test 3}, we have 
\begin{equation*}
	\E_\varepsilon\norm{Y_{\varepsilon,t,d}}_2^2\leq \sum_{j=1}^n\sum_{a=1}^3
	\norm{\widetilde{\operatorname{ad}}_{ja}(\Phi_t)}_2^2\le tW_t,
\end{equation*}
and
\begin{equation*}
	\E_\varepsilon \mathrm{Inf}(Y_{\varepsilon,t,d})\le \sum_{j=1}^n\sum_{a=1}^3
	\mathrm{Inf}\bigl(\widetilde{\operatorname{ad}}_{ja}(\Phi_t)\bigr)\leq W_t.
\end{equation*}
Thus, it follows from \eqref{eq:e2}  that 
\begin{align}\label{eq:e3}
	\E_\varepsilon\norm{\comm{u(T_K)}{Y_{\varepsilon,t,d}}}_2^2
	\leq 32\left(\frac{16K}{d}\right)^d
	\sqrt{\beta(T)}\,
	\left(
	\mathrm{Inf}(T)t W_t
	+W_t\right).
\end{align}

Combining \eqref{eq:e1} and \eqref{eq:e3} yields
\[
  \sqrt{\beta(T)}>
  \frac{W_t}{128\mathrm{Inf}(T)\bigl(1+t\mathrm{Inf}(T)\bigr)(16K/d)^d},
\]
and thus, 
\begin{equation}\label{eq:log-master}
 \log \frac{1}{\beta(T)}\le2d\log\frac{16K}{d}
  +2\log\frac{128\mathrm{Inf}(T)\bigl(1+t\mathrm{Inf}(T)\bigr)}{W_t}.
\end{equation}

From \eqref{eq:dK}, we know that $  8\kappa \leq d\le\kappa(9+2\log R).$
Hence,
\begin{align*}
  \frac Kd
  &\le\max\left\{
    1,\frac{2^{15}\gamma(T)^3}{8\kappa \cdot \kappa^2}+\frac{1}{8\kappa}\right\}\le1+2^{12}q^3+\frac18
   \le2^{13}R^3.
\end{align*}
Here we also used symbols introduced in \eqref{eq:window-parameters}. 
If $0<q:= \gamma(T)/\kappa <1$, then $R=1$. In this case, 
$$2d\log(16K/d)\leq 18 \kappa \times 17 \log 2
\le 1700 \gamma(T) ,$$
where we used the fact $\kappa<8 \gamma(T)$ proved in Lemma \ref{lem:window}. 
If $q\geq 1$, then $R=q$. In this case,
\begin{align*}
	2d\log(16K/d)&\leq 2\kappa (9+\log q) \times (17 \log 2+3\log q)
	,\\
	&=\gamma(T)2 q^{-1}(9+\log q) \times (17 \log 2+3\log q)=: \gamma (T) g(q).
\end{align*}
Note that the function $g$ is decreasing in $[1,\infty)$. Hence, 
$$2d\log(16K/d)\leq \gamma(T)g(1)\leq 1700 \gamma(T). $$

Next, note that $W_t=\eta \mathrm{Var}(T)$,
\[
  t\mathrm{Inf}(T)=\mathrm{Var}(T)q\le q,
  \qquad
  \frac1\eta\le32\sqrt q,
  \qquad
  1\le R\le\gamma(T),
\]
and hence
\begin{align*}
  \frac{128\mathrm{Inf}(T)\bigl(1+t\mathrm{Inf}(T)\bigr)}{W_t}
  =\frac{128\gamma(T)\bigl(1+\mathrm{Var}(T)q\bigr)}\eta
  &\le2^{12}\gamma(T)\sqrt q(1+q)\\
  &\le2^{13}\gamma(T)R^{3/2},
\end{align*}
where
\[
  \sqrt q(1+q)
  \le\begin{cases}
    2q^{3/2}=2R^{3/2},&q\ge1,\\
    2=2R^{3/2},&0<q<1.
  \end{cases}
\]
Consequently,
\begin{equation*}
  2\log\frac{128\mathrm{Inf}(T)\bigl(1+t\mathrm{Inf}(T)\bigr)}{W_t}
  \le2\log\bigl(2^{13}\gamma(T)R^{3/2}\bigr)
  <20+5\log\gamma(T)
  \le25\gamma(T).
\end{equation*}

Combining \eqref{eq:log-master}, and the last two paragraphs, we conclude that 
\[
  \log\frac1{\beta(T)}
  \le1700\gamma(T)+25\gamma(T)\le2^{11}\gamma(T),
\]
which proves \eqref{eq:exp-form}.

Now we deduce the KKL inequality from \eqref{eq:exp-form}.
Let (notice that $\mathrm{Var}(T)\in (0,1]$)
\[
  z(T)=\frac{n}{\mathrm{Var}(T)}\ge \max\{n,2\}. 
\]
If
$\gamma(T)=\frac{\mathrm{Inf}(T)}{\mathrm{Var}(T)}\ge \log z(T)/2^{12}$, then gives
\begin{equation*}
  \beta(T)=\max_{j\in [n]}\|D_jT\|_2^2\ge\frac{\mathrm{Inf}(T)}n=\frac{\gamma(T)}{z(T)}
  \ge\frac{\log z(T)}{2^{12}z(T)}
  \geq \frac{\mathrm{Var}(T)}{2^{12}}\frac{\log n}{n}.
\end{equation*}
If $\gamma(T)<\log z(T)/2^{12}$, then \eqref{eq:exp-form} gives
\[
  \log(1/\beta(T))\leq \frac{2^{11}\mathrm{Inf}(T)}{\mathrm{Var}(T)}
  <\frac{1}{2}\log z(T).
\]
and hence, 
$$ 
\beta(T)>z(T)^{-1/2}=\frac{z(T)^{1/2}}{z(T)}
\geq \frac{\log z(T)}{z(T)}
\geq \frac{\log z(T)}{2^{12}z(T)}
\geq \frac{\mathrm{Var}(T)}{2^{12}}\frac{\log n}{n}.$$
Here we used the fact that  $\log t<\sqrt t$ for $t\ge2$
The proof of Theorem \ref{thm:main} is complete. 
\end{proof}

\section*{Acknowledgment}
The authors are grateful to PhD student Zhuo Cheng for helpful communications. This work was supported by the National Natural Science Foundation of China (Grant Nos. 12125109 \& W2411005 \& 12671165); the Natural Science Foundation of Hunan Province (Grant Nos: 2025ZYJ002, 2024JJ1010 \& 2024RC3040); the Scientific Research Fund of Hunan Provincial Education Department (Grant Nos. 25A0009 \& 25B0008).

During the preparation of this work, the authors used GPT-5.6 Sol to assist with language editing and refinement of mathematical arguments. The author takes full responsibility for the content of this paper.


\begin{thebibliography}{10}

\bibitem{BIM2023}
D.~Beltran, P.~Ivanisvili, and J.~Madrid, \emph{On sharp isoperimetric
  inequalities on the hypercube}, arXiv preprint arXiv:2303.06738 (2023).

\bibitem{BGX2024}
D. Blecher, L. Gao, and B. Xu, \emph{Geometric influences on quantum Boolean cubes}, J. Funct. Anal. \textbf{289} (2025), no.~11, 111132.

\bibitem{BKKKL1992}
J.~Bourgain, J.~Kahn, G.~Kalai, Y.~Katznelson, and N.~Linial, \emph{The
  influence of variables in product spaces}, Israel J. Math. \textbf{77}
  (1992), no.~1-2, 55--64.

\bibitem{CL2012}
D.~Cordero-Erausquin and M.~Ledoux, \emph{Hypercontractive measures,
  {T}alagrand's inequality, and influences}, Geometric aspects of functional
  analysis, Lecture Notes in Math., vol. 2050, Springer, Heidelberg, 2012,
  pp.~169--189.

\bibitem{EG2020}
R.~Eldan and R.~Gross, \emph{Concentration on the {B}oolean hypercube via
  pathwise stochastic analysis}, S{TOC} '20---{P}roceedings of the 52nd
  {A}nnual {ACM} {SIGACT} {S}ymposium on {T}heory of {C}omputing, ACM, New
  York, 2020, pp.~208--221.

\bibitem{EG2022}
\bysame, \emph{Concentration on the {B}oolean hypercube via pathwise stochastic
  analysis}, Invent. Math. \textbf{230} (2022), no.~3, 935--994.

\bibitem{EKLM2022}
R.~Eldan, G.~Kindler, N.~Lifshitz, and D.~Minzer, \emph{Isoperimetric
  inequalities made simpler}, arXiv preprint arXiv:2204.06686 (2022).

\bibitem{Fr1998}
E.~Friedgut, \emph{Boolean functions with low average sensitivity depend on few
  coordinates}, Combinatorica \textbf{18} (1998), no.~1, 27--35.

\bibitem{IZ2024}
P.~Ivanisvili and H.~Zhang, \emph{On the Eldan--Gross inequality}, J. Funct. Anal. \textbf{290} (2026), no.~4, Paper No.~111255.

\bibitem{JLZZ2024}
Y.~Jiao, W.~Lin, S.~Luo, and D.~Zhou, \emph{Quantum {KKL}-type inequalities
  revisited}, arXiv:2411.12399 (2024).

\bibitem{JLZ2025}
Y.~Jiao, S.~Luo, and D.~Zhou, \emph{Functional {$L_1$}-{$L_p$} inequalities in
  the {CAR} algebra}, J. Funct. Anal. \textbf{288} (2025), no.~2, Paper No.
  110700.

\bibitem{KKL1988}
J.~Kahn, G.~Kalai, and N.~Linial, \emph{The influence of variables on {B}oolean
  functions}, 29th {A}nnual {S}ymposium on {F}oundations of {C}omputer
  {S}cience, IEEE Comput. Soc. Press, Washington, DC, [1988] \copyright 1988,
  pp.~68--80.

\bibitem{KKKMS2021}
E.~Kelman, S.~Khot, G.~Kindler, D.~Minzer, and M.~Safra, \emph{Theorems of
  {KKL}, {F}riedgut, and {T}alagrand via random restrictions and
  {L}og-{S}obolev inequality}, 12th {I}nnovations in {T}heoretical {C}omputer
  {S}cience {C}onference, vol. 185, Schloss Dagstuhl. Leibniz-Zent. Inform.,
  Wadern, 2021, pp.~Art. No. 26, 17.

\bibitem{MO2010}
A.~Montanaro and T.~J. Osborne, \emph{Quantum {B}oolean functions}, Chic. J.
  Theoret. Comput. Sci. (2010).

\bibitem{OW2013}
R.~O'Donnell and K.~Wimmer, \emph{K{KL}, {K}ruskal-{K}atona, and monotone
  nets}, SIAM J. Comput. \textbf{42} (2013), no.~6, 2375--2399.

\bibitem{PS2011}
D.~Potapov and F.~Sukochev, \emph{Operator-{L}ipschitz functions in
  {S}chatten-von {N}eumann classes}, Acta Math. \textbf{207} (2011), no.~2,
  375--389.

\bibitem{Ros2020}
G.~Rosenthal, \emph{Ramon van {H}andel's remarks on the {D}iscrete cube},
  (2020).

\bibitem{RWZ2024}
C.~Rouz\'{e}, M.~Wirth, and H.~Zhang, \emph{Quantum {T}alagrand, {KKL} and
  {F}riedgut's theorems and the learnability of quantum {B}oolean functions},
  Comm. Math. Phys. \textbf{405} (2024), no.~4, Paper No. 95, 47.

\bibitem{Ta1994}
M.~Talagrand, \emph{On {R}usso's approximate zero-one law}, Ann. Probab.
  \textbf{22} (1994), no.~3, 1576--1587.

\bibitem{Ta1997}
\bysame, \emph{On boundaries and influences}, Combinatorica \textbf{17} (1997),
  no.~2, 275--285.
\end{thebibliography}

\providecommand{\bysame}{\leavevmode\hbox to3em{\hrulefill}\thinspace}
\providecommand{\MR}{\relax\ifhmode\unskip\space\fi MR }
\providecommand{\MRhref}[2]{%
  \href{http://www.ams.org/mathscinet-getitem?mr=#1}{#2}
}
\providecommand{\href}[2]{#2}

\end{document}